\documentclass[12pt, draftclsnofoot, onecolumn]{IEEEtran}
\IEEEoverridecommandlockouts
\usepackage{graphicx}
\usepackage{float}
\usepackage[cmex10]{amsmath}
\usepackage[compress]{cite}
\usepackage{balance}
\usepackage[UKenglish]{babel}
\usepackage{amssymb}
\usepackage{graphicx, picture, calc}
\usepackage{epstopdf}
\usepackage{subcaption}
\usepackage{mathtools}
\usepackage{booktabs}
\usepackage[compress]{cite}
\usepackage{enumerate}
\usepackage[norule]{footmisc}
\usepackage{algorithm}
\usepackage{algpseudocode}
\usepackage{amsmath}
\usepackage{longtable}
\usepackage{dsfont}
\usepackage{pifont}
\usepackage{color}
\usepackage{xcolor}
\usepackage{amsthm}
\usepackage{multirow}
\usepackage[linesnumbered, algo2e, ruled,norelsize]{algorithm2e}
\usepackage[noabbrev]{cleveref}

\makeatother

\newtheorem{Proposition}{\bf{Proposition}}

\hyphenation{op-tical net-works semi-conduc-tor}
\def\BibTeX{{\rm B\kern-.05em{\sc i\kern-.025em b}\kern-.08em
    T\kern-.1667em\lower.7ex\hbox{E}\kern-.125emX}}

\begin{document}
\title{Design of Energy-Efficient Artificial Noise for Physical Layer Security in  Visible Light Communications}
\author{
        Thanh~V.~Pham,~\IEEEmembership{Member,~IEEE,}
         Anh~T.~Pham,~\IEEEmembership{Senior~Member,~IEEE,}
         ~and 
        Susumu~Ishihara,~\IEEEmembership{Member,~IEEE}
\thanks{Thanh V. Pham and Susumu Ishihara are with the Department of Mathematical and Systems Engineering, Shizuoka University, Shizuoka, Japan (e-mail: pham.van.thanh@shizuoka.ac.jp, susumu.ishihara@shizuoka.ac.jp).}

\thanks{Anh T. Pham is with the Department of Computer Science and Engineering, The University of Aizu,
Fukushima, Japan (e-mail: pham@u-aizu.ac.jp).}
\thanks{Part of this paper has been presented at the 2021 IEEE International Conference on Communications (ICC 2021), Workshop on Green Solutions for Smart Environment, Montreal, Canada. }
}

\maketitle

\begin{abstract}
This paper studies the design of energy-efficient artificial noise (AN) schemes in the context of physical layer security in visible light communications (VLC). Two different transmission schemes termed \textit{selective AN-aided single-input single-output (SISO)} and \textit{AN-aided multiple-input single-output (MISO)} are examined and compared in terms of secrecy energy efficiency (SEE). In the former, the closest LED luminaire to the legitimate user (Bob) is the information-bearing signal's transmitter. At the same time, the rest of the luminaries act as jammers transmitting AN to degrade the channels of eavesdroppers (Eves). In the latter, the information-bearing signal and AN are combined and transmitted by all luminaries. When Eves' CSI is unknown, an indirect design to improve the SEE is formulated by maximizing Bob's channel's energy efficiency. A low-complexity design based on the zero-forcing criterion is also proposed.  In the case of known Eves' CSI, we study the design that maximizes the minimum SEE among those corresponding to all eavesdroppers. At their respective optimal SEEs, simulation results reveal that when Eves' CSI is unknown, the selective AN-aided SISO transmission can archive twice better SEE as the AN-aided MISO does. In contrast, when Eves' CSI is known, the AN-aided MISO outperforms by 30\%. 
\end{abstract}

\begin{IEEEkeywords}
VLC, energy efficiency, physical layer security, jamming, precoding.  
\end{IEEEkeywords}

\section{Introduction}
\subsection{Background}
Indoor wireless activities generate roughly 80\% of the total wireless Internet traffic \cite{chun2019a}.
In the case of indoor scenarios, a standard means of Internet access is Wireless Fidelity (Wi-Fi), which is currently imposed by increasingly stringent requirements for high data rates due to  
the tremendous growth in the number of mobile devices and data-intensive applications. A great deal of research effort has been devoted to satisfying this increasing demand by exploring new communications technologies, among which visible light communications (VLC) is a promising candidate. Compared to existing wireless systems, VLC offers the unique advantage of simultaneously providing illumination and communication, both essential for indoor activities. Furthermore, the promised high data rate comes at no license-fee spectrum and little infrastructural installation.


Over the years, numerous theoretical and experimental studies on VLC have focused on improving its practicality and data-rate performance (see \cite{Matheus2019,Ahmed2020,Amjad2021} and references therein) for both indoor and outdoor scenarios. To realize practical VLC systems, one must also incorporate security and privacy in the design. Traditional approaches for acquiring secure transmissions over public communication media are based on well-established key-based cryptographic algorithms. The security of conventional cryptography relies on the extreme computational complexity of solving certain mathematical problems (e.g., the factoring problem). Given the current computing power of the classical computer, it is likely that breaking the secret keys is not feasible within a meaningful amount of time. However, quantum computers, which are expected to be fully functional in the foreseeable future, can threaten the security of current encryption schemes as they are exponentially faster than their classical counterparts \cite{Davide2022}. The vulnerability of traditional cryptography to quantum computers has been the primary motivation for researching alternative security measures not based on computational difficulty. In this context, physical layer security (PLS) is a promising approach to complement and/or replace cryptographic techniques as the information confidentiality can be kept perfectly secure regardless of the computational power of the eavesdropper\cite{Wyner1975}. 
\subsection{Related Works}
While the literature on PLS in VLC systems is considerably extensive (see \cite{BLINOWSKI2019246,Arfaoui2020Survey} and references therein), the existing studies mainly focused on two directions. The first focuses on analyzing different lower and upper bounds on the secrecy capacity of the single-input single-output (SISO) wiretap VLC channels \cite{Wang2018,Wang2019,Wang2021}. 
Compared with the SISO, the multiple-input single-output (MISO) configuration is more relevant in practice, as deploying multiple LED luminaires is a norm in illumination. 
In the second direction, several studies, therefore, exploited the spatial degree of freedom at the transmitting side to investigate the application of precoding and artificial noise (AN) to improve secrecy performance. Specifically, in the case of precoding, Mostafa $et~al.$ studied the designs of zero-forcing (ZF) and general precoders to maximize the achievable secrecy rate for systems with a single legitimate user and a single eavesdropper considering both perfect and uncertain estimation of the channel state information (CSI) \cite{mostafa2015physical,mostafa2016optimal}. Subsequent works then examined precoding designs for more general system configurations with different objectives, such as multi-eavesdropper with minimizing the total transmitted power \cite{ma2016optimal}, multi-legitimate users with maximizing the achievable secrecy sum-rate \cite{pham2017secrecy}, and multi-user systems, in which each user treats others as eavesdroppers (hence, transmitted messages for users must be kept mutually confidential)\cite{arfaoui2018secrecy}. ZF precoding strategies to deal with both active and passive eavesdroppers were also proposed in \cite{Cho2021}.



In addition to precoding, AN is another approach to enhancing PLS performance \cite{Goel2008,Zou2016,Liu2017}.
In essence, AN is a jamming signal which is purposely generated and transmitted by the transmitter so that it causes minimal or no interference to the legitimate user while potentially degrading the quality of the eavesdroppers' channel. Note that AN should be deployed with precoding to fully exploit the available spatial degree of freedom. Hence, precoding can be considered a special case of AN-aided transmission where no power is allocated to AN generation. Initial research on the topic considered systems where a fixed LED luminaire transmits the information-bearing signal while the rest broadcast jamming signals \cite{Mostafa2014,Zaid2015}. We term this scheme as \emph{AN-aided SISO transmission}. Subsequent works then focused on the jamming scheme where the information-bearing signal and AN are combined and transmitted by all luminaires. This scheme is referred to as \emph{AN-aided MISO transmission}. Joint precoding and AN designs were investigated under different performance criteria. For example, the authors in \cite{Shen2016} studied the design that maximizes the signal-to-interference-plus-noise ratio (SINR) of the legitimate user's channel while constraining that of the eavesdroppers' channel to a predefined threshold. The study in \cite{Cho2019} examined the same objective yet under the constraint on the average SINR of the eavesdroppers' channel (assuming that eavesdroppers' CSI is unknown at the transmitters). Considering a multi-Bob VLC system, AN designs to maximize the minimum SINR among Bobs' channels were investigated in \cite{Pham2019Acess}. Given specific constraints on the SINRs of Bob's and Eves' channels, our study in \cite{pham2020energy} AN designs to minimize the total transmitted power considering both perfect and imperfect CSI estimations.

Aside from improving the secrecy rate, there is also a need to optimize energy efficiency (EE), which measures how much energy is consumed per a transmitted information bit \cite{Bjornson2018}. This has been of particular interest in recent years due to the increased effort in reducing energy consumption to combat climate change \cite{Scheck2010}. Although several studies have been conducted on EE in VLC and hybrid VLC/RF systems \cite{Kashef2016,Khreishah2018,Zhang2018,Ma2018,Aboagye2020,An2020}, little attention has been paid to that in the context of PLS. In fact, to the best of the authors' knowledge, our previous studies in \cite{Son2021,Duong2021} are the first works concerning the PLS in VLC from the EE perspective. Defining secrecy energy efficiency (SEE) as the consumed energy per a transmitted confidential bit, we designed precoding schemes to maximize the sum SEE for multi-user VLC systems where messages among users are kept mutually confidential.     
\subsection{Contributions}
Against the above background, this paper aims to study energy-efficient AN designs for PLS in VLC systems. We consider the systems with a legitimate user (i.e., Bob) and multiple eavesdroppers (i.e., Eves) whose CSI can be known or unknown on the transmitting side. The AN-aided MISO and a refined AN-aided SISO transmission scheme, which we term as \emph{selective AN-aided SISO}, are investigated and compared in terms of the achievable SEE. 
Contrary to the original AN-aided SISO, where the information-bearing signal is transmitted by a fixed LED luminaire, in the proposed scheme, that is done by the luminaire, which has the highest channel gain to Bob. The rest of the luminaires act as jammers transmitting AN. By doing so, Bob can enjoy the strongest information-bearing signal while suffering less from the AN. Hence, given the same amount of consumed energy, the proposed scheme should achieve a better achievable secrecy rate, resulting in a higher SEE compared with the original counterpart. In the conference version of this paper \cite{Pham2021}, the design of the selective AN-aided SISO scheme considering the unavailability of Eves' CSI was studied. Extending from this, the contributions of this paper are summarized as follows. 

\begin{itemize}
    \item In addition to the presented design in \cite{Pham2021}, which was solved using the Dinkelbach algorithm and the convex-concave procedure (CCP), a simple and low-complexity ZF AN-aided SISO design is investigated. Simulation results show that this design approach achieves comperable SEE that of the design in \cite{Pham2021} with significantly reduced computational time. 
    \item We study an AN design when Eves' CSI is available at the transmitting side. The presence of multiple Eves results in multiple sub-wiretap channels, each characterized by a different achievable secrecy rate. To ensure that the transmitted information to Bob is secure against all Eves, the system's achievable secrecy rate is defined as the minimum among all sub-wiretap channels. Then, an AN design is cast as a max-min SEE problem, which requires a different approach to solve. Specifically, we transform the original problem into a feasibility problem, which can then be solved using the bisection method. 
    \item Comprehensive simulations are conducted to compare the complexity and performance of the original AN-aided SISO, the proposed selective AN-aided SISO, and the AN-aided MISO transmissions under various parameter settings. Our main finding is that the selective AN-aided SISO scheme achieves a better SEE than the AN-aided MISO does in the case of unknown Eves' CSI, while it is the opposite in the case of known Eves' CSI as the AN-aided MISO scheme is superior. 
\end{itemize}

\subsection{Organization}
The rest of the paper is organized as follows. The system models, including the channel model, signal model, and energy consumption, are described in Section II. Section III presents different AN-aided designs from the perspective of SEE maximization. Numerical results with related discussions are given in Section IV. Finally, Section V concludes the paper. 

\emph{Notation}: The following notations are used in the paper. $\mathbb{R}$ and $\mathbb{R}_{\geq 0}$ denote the set of real and nonnegative real numbers, respectively. Uppercase and lowercase bold letters (e.g., $\mathbf{A}$ and $\mathbf{a}$) represent matrices and column vectors, respectively. $\mathbf{A}^T$ is the transpose of $\mathbf{A}$. In addition, $|\cdot|$, $\lVert\cdot\rVert_2$,  $\lVert\cdot\rVert_{\infty}$, and $\text{tr}(\cdot)$ are the absolute value, Euclidean norm, maximum norm, and trace operators, respectively. Finally, $\mathbb{E}[x]$ is the expected value of $x$, $[\mathbf{a}]_n$ denotes the $n$-th element of $\mathbf{a}$ and $\mathbf{1}_{N}$ is the all-one column vector of size $N$.   
\section{System Models}
We consider a typical room as depicted in Figs.~\ref{sysmodel1} and \ref{sysmodel2}, where $N_T$ LED luminaries are deployed for both illumination and communications. These LEDs are connected via wired connections to a central processing unit (CPU), which is responsible for signal processing and coordination among LED transmitters. There are one legitimate user (Bob) and multiple eavesdroppers (Eves), who, assuming that, are non-colluding\footnote{In practice, Eves can cooperate to maximize their eavesdropped information. This, however, may complicate the analysis of the AN design. Hence, we leave this scenario for future research.}. It is assumed Bob feeds its estimated CSI back to the CPU via a wireless uplink (e.g., infrared or Wi-Fi). By comparing the entries of the received CSI vector, the CPU can determine the closest luminaire to Bob. 
In the selective AN-aided SISO transmission scheme illustrated in Fig.~\ref{sysmodel1}, the closest luminaire to Bob acts as Alice, who transmits the information-bearing signal. The rest of the luminaries act as jammers transmitting AN. In the AN-aided MISO  scheme described in Fig.~\ref{sysmodel2}, all luminaries act as Alice and jammer to transmit the combined information-bearing signal and AN. We also consider that Eves's CSI can be either known or unknown on the transmitting side. Accordingly, different design approaches are presented considering the availability of Eves' CSI.   
\begin{figure}
\centering
\begin{subfigure}[t]{0.49\textwidth}
\centering
\includegraphics[scale=0.36]{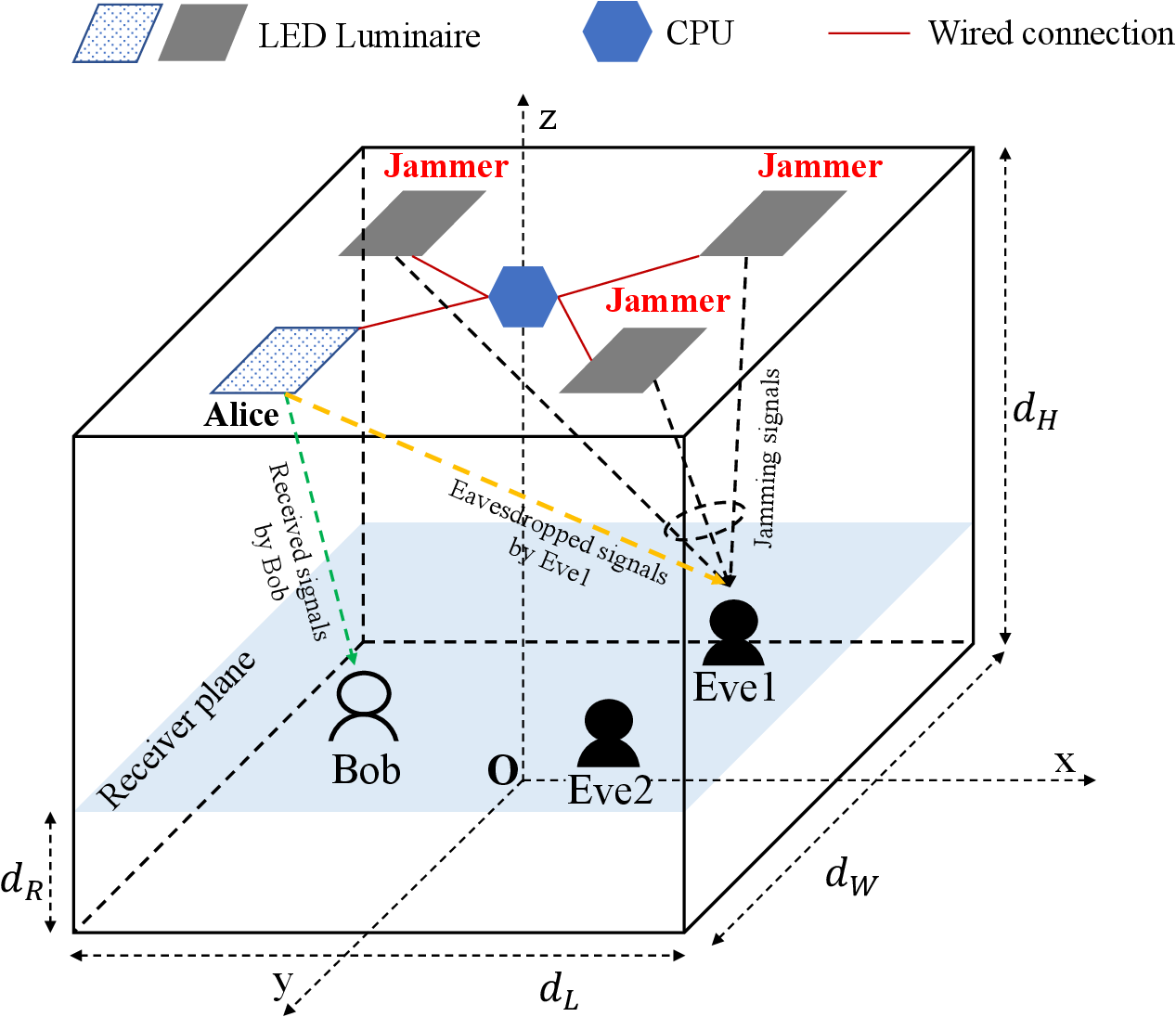}
\caption{Selective AN-aided SISO transmission.}
\label{sysmodel1}
\end{subfigure}
\begin{subfigure}[t]{0.49\textwidth}
\centering
\includegraphics[scale = 0.36]{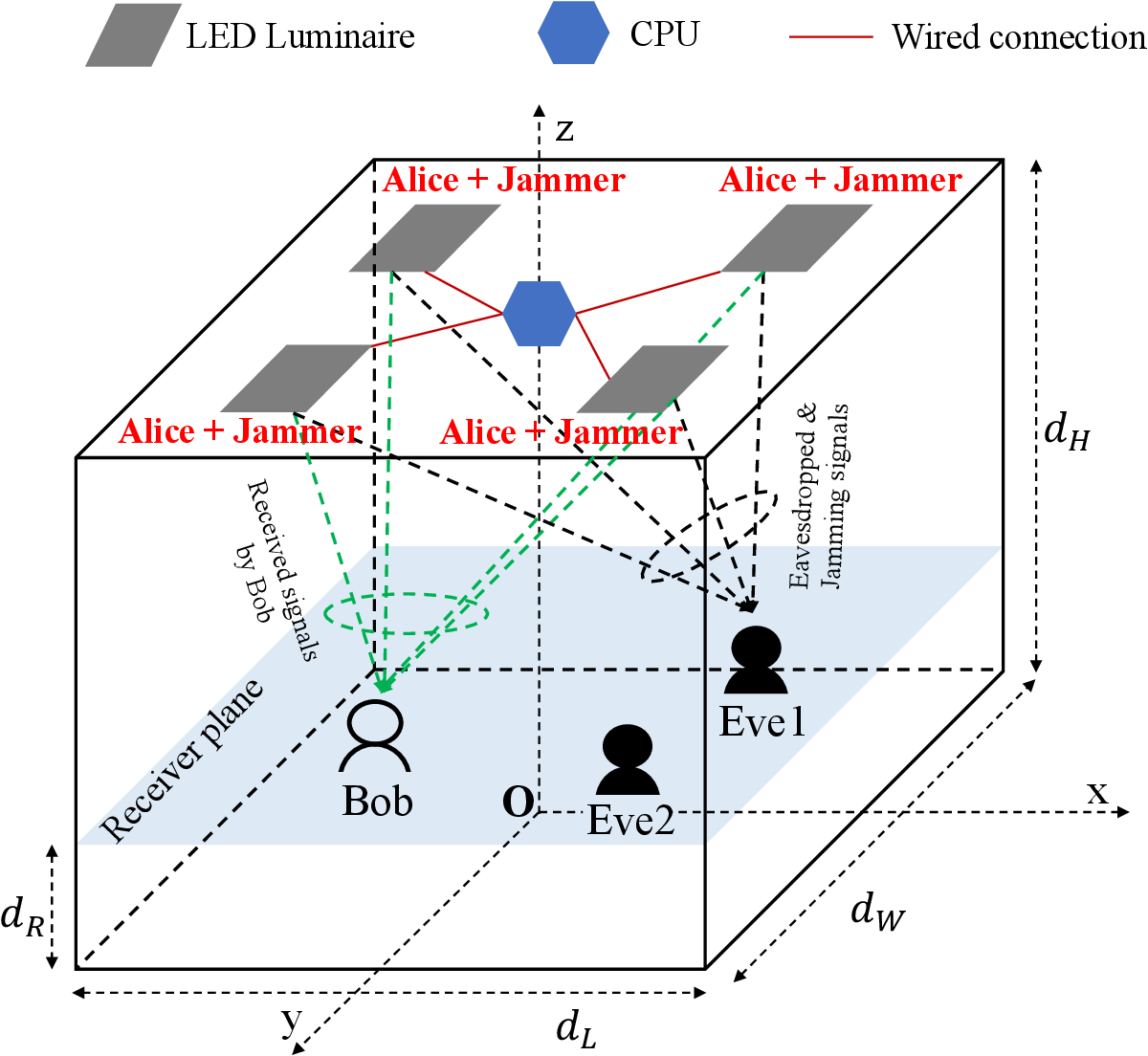}
\caption{AN-aided MISO transmission.}
\label{sysmodel2}
\end{subfigure}
\caption{AN-aided transmission schemes.}
\end{figure}
\subsection{Channel Model}
In indoor environments, the transmitted signal propagates to a user via a direct light-of-sight (LoS) and multiple non-light-of-sight paths (NLoS) (due to reflections of walls and ceiling). Nevertheless, it was experimentally verified that the total received optical power at the receiver is predominantly contributed by the LoS signal. On the other hand, the NLoS channel gain, which accounts for less than 5\% of the total received optical power, 
Therefore, only the LoS propagation path is considered in this work to simplify the analysis.

The LoS channel coefficient between an LED luminaire  and a user denoted as $h$ is given by \cite{Komine2004}
\begin{align}
h  = & \frac{A_r(m+1)}{2\pi l^2}T_s(\psi)g(\psi)\cos^{m}(\phi)\cos(\psi)  \times \mathds{1}_{[0, \Psi]}(\psi), 
\label{eqn:chann_coeff}
\end{align}
where $\mathds{1}_{[x, y]}(\cdot)$ denotes the indicator function, $A_r$ is the active area of the photodiode (PD), $l$ is the link length,  $\phi$ is the angle of irradiance, $\Psi$ is the optical field of view (FOV) of the PD, and $\psi$ is the angle of incidence. Also, $m=-\frac{\ln(2)}{\ln(\Theta_{0.5})}$ is the order of the Lambertian emission 
where $\Theta_{0.5}$ is the LED's semi-angle for half illuminance. $T_s(\psi)$ is the gain of the optical filter, and $g(\psi) = \frac{\kappa^2}{\sin^2(\Psi)}\mathds{1}_{[0, \Psi]}(\psi)$ is the gain of the optical concentrator, 
with $\kappa$ being the refractive index of the optical concentrator.
\subsection{Signal Model}
In this section, only the signal model of the selective AN-aided transmission scheme is presented for the sake of conciseness. Without otherwise noted, analysis in the case of the AN-aided MISO transmission scheme follows the same manner. In the following, signal models for the case of unknown and known Eves's CSI are respectively described. 
\subsubsection{Unknown Eves' CSI}
Let $d \in \mathbb{R}$ be a $M$-ary pulse amplitude modulation (M-PAM)-modulated information signal intended for Bob. Also, let $\mathbf{z} \in \mathbb{R}$ be a random signal representing the AN. Without loss of generality, it is assumed that $d$ and $z$ are both zero-mean and normalized and uniformly distributed over $[-1,~1]$, which is notated as $d, z \sim \mathcal{U}[-1, 1]$. In our proposed scheme, $d$ is sent by Alice while $\mathbf{z}$ is transmitted by the jammers. For Alice, the information-bearing signal $d$ is first scaled by a weighting factor (i.e., precoder) $v$ and then added by a DC-bias $I_{\text{DC}}$, which specifies the desired illumination level. The resulting LED's modulating signal is then represented by
\begin{align}
    x_d = vd + I_{\text{DC}}.
    \label{modulatingCurrent}
\end{align}

It is known that there exists a linear range (denoted as, for example, $[I_{\text{min}},~I_{\text{max}}]$) for each particular LED in which the output optical signal is linearly proportional to the amplitude of the modulating signal $x_d$. Therefore, to ensure efficient use of the LEDs,  $x_d$ should be constrained within the linear range as
\begin{align}
    I_{\text{min}} \leq x_d \leq I_{\text{max}}.
    \label{constratinOnModulatingCurrent}
\end{align}

As $d \sim \mathcal{U}[-1,~1]$, the following constraint on $v$ is needed to satisfy \eqref{constratinOnModulatingCurrent}
\begin{align}
    |v| \leq \Delta_{\text{DC}}, 
    \label{precoder-constraint-1}
\end{align}
where $\Delta_{\text{DC}} = \text{min}\left(I_{\text{DC}} - I_{\text{min}}, I_{\text{max}} - I_{\text{DC}}\right)$. When information regarding Eves (e.g., the number of Eves and their positions) is unknown at the transmitters, the use of a single precoder $\mathbf{w} \in \mathbb{R}^{N_T -1}$ for the AN is appropriate to simplify the design. Similar to \eqref{modulatingCurrent}, it follows that
\begin{align}
    \lVert\mathbf{w}\rVert_{\infty}  \leq \Delta_{\text{DC}}.
    \label{precoder-constraint-2}
\end{align}

Let $h_{\text{R}} \in \mathbb{R}$ and $\overline{\mathbf{h}}_{\text{R}} \in \mathbb{R}^{N_T-1}$ be the channel gain and channel gain vector from Alice and jammers to Bob, respectively\footnote{The subscript `R' is used to denote Bob and Eves as receivers in general. When necessary, the subscript `B' is used to refer to Bob while `E' is used to refer to Eve.}. If we denote $p_{d} = \eta\left(vd + I_{\text{DC}}\right)$ and $\mathbf{p}_{z} = \eta\left(\mathbf{w}z + I_{\text{DC}}\right)$ with $\eta$ being the LED conversion factor as the emitted optical power at Alice and jammers, respectively,
the received current signals at Bob and Eves can be written as 
\begin{align}
    y_{\text{R}}  = \gamma \mathbf{h}^T_{\text{R}} \mathbf{p}_{\text{t}} + n_{\text{R}} 
                  = \gamma \eta\left(h_{\text{R}}vd + \overline{\mathbf{h}}^T_{\text{R}}\mathbf{w}z + \mathbf{h}^T_{\text{R}}\textbf{I}_{\text{DC}}\right) + n_{\text{R}}, 
                  \label{receivedSignal}
\end{align}
where $\gamma$ is the photo-diode responsivity, $\mathbf{h}_{\text{R}} = \begin{bmatrix}h_{\text{R}} & \overline{\mathbf{h}}^T_{\text{R}}\end{bmatrix}^T$, $\mathbf{p}_{\text{t}} = \begin{bmatrix}p_{\text{d}}& \mathbf{p}_{z}^T\end{bmatrix}^T$, $\mathbf{I}_{\text{DC}} = I_{\text{DC}}\mathbf{1}_{N_T}$ is the DC-bias vector, and $n_{\text{R}}$ is the receiver noise. It is reasonable to model $n_{\text{R}}$ as a zero-mean additive white Gaussian noise with the variance being given by
\begin{align}
    \sigma^2_{\text{R}} = 2\gamma e\overline{p}^r_{\text{R}}B_{\text{mod}} + 4\pi e A_{\text{d}}\gamma\chi_{\text{amb}}(1-\cos(\Psi))B + i^2_{\text{amp}}B_{\text{mod}}, 
\end{align}
where $e$ is the elementary charge, $\overline{p}^r_{\text{R}} = \mathbb{E}\left[p^r_{\text{R}}\right] = \eta\mathbf{h}^T_{\text{R}}\textbf{I}_{\text{DC}}$ is the average received optical power, $B_{\text{mod}}$ is the modulation bandwidth of the LED, $\chi_{\text{amb}}$ is the ambient light photo-current, and $i_{\text{amp}}$ is the pre-amplifier noise current density. 

At the receiver, the DC current $\mathbf{h}^T_{\text{R}}\mathbf{I}_{\text{DC}}$ is filtered out, leaving the AC term for signal demodulation 
\begin{align}
    \overline{y}_{\text{R}}= \gamma\eta\left(h_{\text{R}}vd + \overline{\mathbf{h}}^T_{\text{R}}\mathbf{w}z\right) + n_{\text{R}}.
    \label{channelModel}
\end{align}

The signal-to-interference-plus-noise ratios (SINRs) of Bob's and Eves' received signals are then calculated as
\begin{align}
    \text{SINR}_{\text{R}} = \frac{\left(\gamma\eta \mathbb{E}[d^2]h_{\text{R}}v\right)^2}{\left(\gamma\eta \mathbb{E}[z^2]\overline{\mathbf{h}}^T_{\text{R}}\mathbf{w}\right)^2 + {\sigma}^2_{\text{R}}} = \frac{\left(h_{\text{R}}v\right)^2}{\left(\overline{\mathbf{h}}^T_{\text{R}}\mathbf{w}\right)^2 + \overline{\sigma}^2_{\text{R}}},
\end{align}
where $\overline{\sigma}^2_{\text{R}} = \frac{\sigma^2_{\text{R}}}{\frac{1}{3}\left(\gamma\eta\right)^2}$ (since $d, z \sim \mathcal{U}[-1, 1]$, then $\mathbb{E}[d^2] = \mathbb{E}[z^2] = \frac{1}{3}$).
\subsubsection{Known Eves' CSI}
In certain scenarios, the information intended for Bob must be kept confidential; thus, any other active users should be treated as Eves. 
As a result, Eves' CSI can be assumed to be known on the transmitting side. To better exploit the available spatial information, a separate AN signal should be generated for each Eve. Thus, 
denote $z_k \sim \mathcal{U}[-1, 1]$ as the AN symbol intended for the $k$-th Eve and $\mathbf{w}_k \in \mathbb{R}^{N_T-1}$ as the corresponding precoder. The LED's modulating current vector at the jammers is then written as
\begin{align}
    \mathbf{x}_z = \sum_{k = 1}^K\mathbf{w}_kz_k + I_{\text{DC}},
\end{align}
where $K$ is the number of Eves. Bounding $\mathbf{x}_z$ between $\left[I_{\text{min}},~I_{\text{max}}\right]$ results in the following constraint on $\mathbf{w}_k$'s
\begin{align}
    \sum_{k = 1}^K\left|[\mathbf{w}_k]_n\right|\leq \Delta_{\text{DC}},~~~\forall n = 1, 2, \hdots, N_T-1.
    \label{constraint_AN_precoder}
\end{align}

The received AC current signals at Bob and Eves are then given by
\begin{align}
    y_{\text{R}} = \gamma\mathbf{h}^T_{\text{R}}\mathbf{p}_{t} + n_{\text{R}} = \gamma\eta\left(h_{\text{R}}vd + \overline{\mathbf{h}}^T_{\text{R}}\sum_{k = 1}^K\mathbf{w}_kz_k \right) + n_{\text{R}},
    \label{receivedSignalKnownEve}
\end{align}
where in this case $\mathbf{p}_{\text{t}} = \begin{bmatrix}p_d & \mathbf{p}^T_z\end{bmatrix}^T$ with $\mathbf{p}_z = \eta\left(\sum_{k = 1}^K\mathbf{w}_kz_k + I_{\text{DC}}\right)$.
\subsubsection{Signal Model for the AN-aided MISO}
In the case of AN-aided MISO, since each luminaire acts as Alice and jammer simultaneously, notations of the channel and the precoder differ slightly from those of the selective AN-aided SISO. These differences are summarized in Table \ref{table1}. 
\begin{table}[ht]
\centering
\caption{Notational differences between Selective AN-aided SISO and AN-aided MISO.}
\label{table1}
\resizebox{0.8\textwidth}{!}{\begin{tabular}{|l|l|l|}
 \hline
  & Selective AN-aided SISO & AN-aided MISO \\ 
\hline Alice-Receiver's channel & $h_{\text{R}} \in \mathbb{R}_{\geq 0}$ &  \multirow{2}{*}{$\mathbf{h}_{\text{R}} \in \mathbb{R}_{\geq 0}^{N_T}$} \\ 
\cline{1-2} Jammer-Receiver's channel & $\overline{\textbf{h}}_{\text{R}} \in \mathbb{R}_{\geq 0}^{N_T - 1}$ &  \\
\hline Precoder of information symbol & $v\in \mathbb{R}$ & $\mathbf{v}\in \mathbb{R}^{N_T}$\\
\hline Precoder of AN & $\mathbf{w}\in \mathbb{R}^{N_T - 1}$ & $\mathbf{w}\in \mathbb{R}^{N_T}$\\
\hline
\end{tabular}}
\end{table}
With this, the LED's drive current vector in the case of unknown Eves' CSI and known Eves' CSI are given by $\mathbf{x} = \mathbf{v}d + \mathbf{w}z +I_{\text{DC}}$ and $\mathbf{x} = \mathbf{v}d + \sum_{k = 1}^K\mathbf{w}_kz_k + I_{\text{DC}}$, respectively. Then, constraints on $\mathbf{v}$, $\mathbf{w}$, and the received signals at the receiver follow the same analyses presented for the selective AN-aided SISO scheme.  
\subsection{Energy Consumption and Energy Efficiency}
\subsubsection{Unknown Eves' CSI}
Since Alice and jammers have no knowledge about Eves, we consider a worst-case design based on an achievable rate of Bob's channel. It is observed that the channel in \eqref{channelModel} is subject to a Gaussian channel with amplitude-constrained input and interference. For such a channel, a lower bound on its channel capacity is given by \cite[Eq.~(4)]{Pham2019}, 
\begin{align}
    C_{\text{R}, l}(v, \mathbf{w}) = \frac{1}{2}\log_2\left(\frac{2\left(\left(h_{\text{R}}v\right)^2 + \left(\overline{\mathbf{h}}^T_{\text{R}}\mathbf{w}\right)^2\right) + \pi e\overline{\sigma}^2_{\text{R}}}{\pi e \left(\frac{1}{3}{\left(\overline{\mathbf{h}}^T_{\text{R}}\mathbf{w}\right)^2} + \overline{\sigma}^2_{\text{R}}\right)}\right).
    \label{achievableRateBob}
\end{align}
One can also derive an upper bound on the channel capacity based on \cite[Eq. (7)]{Pham2019}. However, the resulting design problem is not particularly different. For the sake of conciseness, we thus consider designs with respect to the lower bound capacity (or lower bound secrecy capacity as in the case of known Eves' CSI). 

In the considered system, energy is consumed for illumination, generation of the information-bearing signal and AN, and circuitry operations. In general, while the powers for illumination ($P_{\text{LEDs}}$) and circuitry operations ($P_{\text{circuit}}$) can be assumed to be fixed (i.e., the dimming level is unchanged), the power level by the information-bearing signal ($P_{\text{data}}$) and AN ($P_{\text{AN}}$) needs to be optimized with respect to the achievable secrecy rate. The total power can then be written as
\begin{align}
    P_{\text{total}} & =  P_{\text{circuit}} + P_{\text{LEDs}} + P_{\text{data}} + P_{\text{AN}} 
     = P_{\text{circuit}} + \sum_{N_T} U_{\text{LEDs}}I_{\text{DC}} + \zeta\left(v^2 + \lVert\mathbf{w}\rVert^2\right),
\end{align}
where $U_{\text{LEDs}}$ is the LEDs' forward voltage and $\zeta = \frac{1}{3}R_{\text{AC}}$ with $R_{\text{AC}}$ being the resistance of the AC circuit. Given the achievable rate in \eqref{achievableRateBob}, the EE of the system with respect to Bob's channel is expressed by 
\begin{align}
    \Phi_{\text{B}}(v, \mathbf{w}) = \frac{C_{\text{B}, l}(v, \mathbf{w})}{P_{\text{circuit}} + \sum_{N_T} U_{\text{LEDs}}I_{\text{DC}} + \zeta\left(v^2 + \lVert\mathbf{w}\rVert^2\right)}.
    \label{SEE}
\end{align}

\subsubsection{Known Eves' CSI}
In this scenario, due to the availability of Eves' CSI, achievable secrecy rates can be derived to calculate the SEE. In particular, a lower bound  on the secrecy capacity of the wiretap channel comprising Bob and the $k$th Eve is written by
\begin{align}
    C^k_s\left(v, \mathbf{W}\right) = C_{\text{B}, l}\left(v, \mathbf{W}\right) - C^k_{\text{E},u}\left(v, \mathbf{W}\right),
\end{align}
where $\mathbf{W} = \begin{bmatrix}
    \mathbf{w}_1 & \mathbf{w}_2 & \cdots & \mathbf{w}_K
\end{bmatrix}$ and $C^k_{\text{E}, u}\left(v, \mathbf{W}\right)$ is an upper bound on the capacity of the $k$-th Eve's channel, which can be derived from \cite[Eq.~(7)]{Pham2019}. Using \eqref{achievableRateBob} gives the following expression for $C^k_s(v, \mathbf{W})$
\begin{align}
    C^k_{\text{s}}(v, \mathbf{W}) & =  \frac{1}{2}\log_2\left(\frac{2\left(\left(h_{\text{B}}v\right)^2 + \left\lVert\overline{\mathbf{h}}^T_{\text{B}}\mathbf{W} \right\rVert^2\right) + \pi e \overline{\sigma}^2_{\text{B}}}{\pi e\left(\frac{1}{3}\left\lVert\overline{\mathbf{h}}^T_{\text{B}}\mathbf{W}\right\rVert^2 + \overline{\sigma}^2_{\text{B}}\right)}\right) \nonumber \\ &- \frac{1}{2}\log_2\left(\frac{\pi e \left(\frac{1}{3}\left(\left(h_{\text{E}, k}v\right)^2 + \left\lVert\overline{\mathbf{h}}^T_{\text{E}, k}\mathbf{W}\right\rVert^2\right) + \overline{\sigma}^2_{\text{E}, k}\right)}{2\left\lVert\overline{\mathbf{h}}^T_{\text{E}, k}\mathbf{W}\right\rVert^2 + \pi e \overline{\sigma}^2_{\text{E}, k}}\right),
    \label{lowerboundSR}
\end{align}
with $h_{\text{E}, k}$ and $\overline{\mathbf{h}}_{\text{E}, k}$ being channel coefficients between Alice, jammer and the $k$-th Eve, respectively. Moreover, the constraint in \eqref{constraint_AN_precoder} can rewritten in terms of $\mathbf{W}$ as 
\begin{align}
    \left\lVert\mathbf{W}\right\rVert_{\infty} \leq \Delta_{\text{DC}}.
\end{align}
Using the same assumptions in the case of unknown Eves' CSI, the total power consumption is given by 
\begin{align}
    P_{\text{total}} & =  P_{\text{circuit}} + P_{\text{LEDs}} + P_{\text{data}} + P_{\text{AN}} 
     = P_{\text{circuit}} + \sum_{N_T} U_{\text{LEDs}}I_{\text{DC}} + \zeta\left(v^2 + \text{tr}\left(\mathbf{W}\mathbf{W}^T\right)\right).
\end{align}

Note that to ensure the confidentiality of the transmitted signal against all eavesdroppers, the information rate must be no higher than the lowest achievable secrecy rate among $K$ values of $C^k_s(v, \mathbf{W})$. Therefore, we are interested in an AN design that maximizes the SEE corresponding to the minimum achievable secrecy rate, which is defined by
\begin{align}
    \widetilde{\Phi}^{\text{min}}_s(v, \mathbf{W}) = \frac{\underset{k}{\text{min}}\hspace{2mm}C^k_s(v, \mathbf{W})}{P_{\text{circuit}} + \sum_{N_T}U_{\text{LEDs}}I_{\text{DC}} + \zeta\left(v^2 + \text{tr}\left(\mathbf{W}\mathbf{W}^T\right)\right)}.
    \label{lowerboundSR_k}
\end{align}
\section{Secrecy Energy Efficiency Maximization}
The availability of Eves's CSI at Alice and the jammers is critical to our design approaches. In the following, we, hence, separately consider two scenarios: unknown and known Eves' CSI. The designs are again presented for the selective SISO transmission scheme, mentioning the MISO transmission scheme when needed. 
\subsection{Unknown Eves' CSI}
In most practical cases, Eves are passive eavesdroppers who do not feed their CSI back to the transmitters, leading to the unavailability of Eves' CSI on the transmitting side. 
Without the knowledge of Eves' CSI, direct optimization of the SEE may not be possible. Instead, an indirect design approach is to maximize the EE of Bob's channel. The obtained solution $(v, \mathbf{w})$ is then used to calculate the SEE.  Due to the proportionality between the capacity of Bob's channel and the secrecy capacity, this design approach possibly results in good SEE performance as well. 

With constraints on the precoders $v$ and $\mathbf{w}$ given in \eqref{precoder-constraint-1} and \eqref{precoder-constraint-2}, a naive way to formulate an EE maximization problem for Bob's channel is as follows
\begin{subequations}
\begin{alignat}{2}
\bf{\mathcal{P}1} \hspace{5mm} & \underset{v, \mathbf{w}}{\text{maximize}}  & \hspace{2mm}  & \Phi_{\text{B}}(v, \mathbf{w}) \label{P1:obj_func}\\
& \text{subject to} &		& \nonumber \\
&  & &  |v| \leq \Delta_{\text{DC}},  \label{P1:constraint1}\\ 
&  & & \lVert\mathbf{w}\rVert_{\infty} \leq \Delta_{\text{DC}}. \label{P1:constraint2} 
\end{alignat}
\end{subequations}
\begin{Proposition}
The optimal AN solution to $\mathcal{P}\mathbf{{1}}$ is $\mathbf{w}^* = \mathbf{0}$. 
\end{Proposition}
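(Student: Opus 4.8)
The plan is to show that, for every feasible $v$ and every feasible $\mathbf{w}$, one has $\Phi_{\text{B}}(v,\mathbf{w})\le\Phi_{\text{B}}(v,\mathbf{0})$. Since $(v,\mathbf{0})$ is feasible whenever $(v,\mathbf{w})$ is (the constraint $\lVert\mathbf{0}\rVert_{\infty}\le\Delta_{\text{DC}}$ is trivial), this immediately yields $\mathbf{w}^{*}=\mathbf{0}$. The mechanism is to separate how $\mathbf{w}$ enters the numerator and the denominator of $\Phi_{\text{B}}$ and argue that $\mathbf{w}=\mathbf{0}$ simultaneously maximizes the former and minimizes the latter.

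First I would note that, for fixed $v$, the denominator $P_{\text{circuit}}+\sum_{N_T}U_{\text{LEDs}}I_{\text{DC}}+\zeta\left(v^{2}+\lVert\mathbf{w}\rVert^{2}\right)$ is strictly increasing in $\lVert\mathbf{w}\rVert^{2}$ and hence uniquely minimized at $\mathbf{w}=\mathbf{0}$. For the numerator, observe from \eqref{achievableRateBob} that $C_{\text{B},l}(v,\mathbf{w})$ depends on $\mathbf{w}$ only through the scalar $t:=\left(\overline{\mathbf{h}}^{T}_{\text{B}}\mathbf{w}\right)^{2}\ge 0$. Writing $a:=\left(h_{\text{B}}v\right)^{2}\ge 0$, we have $C_{\text{B},l}=\tfrac{1}{2}\log_{2}g(t)$ with
\begin{align}
    g(t)=\frac{2(a+t)+\pi e\overline{\sigma}^{2}_{\text{B}}}{\pi e\left(\tfrac{1}{3}t+\overline{\sigma}^{2}_{\text{B}}\right)} .
\end{align}
A routine quotient-rule differentiation shows that the sign of $g'(t)$ equals the sign of $(6-\pi e)\overline{\sigma}^{2}_{\text{B}}-2a$, which is strictly negative because $\pi e\approx 8.54>6$ and $a,\overline{\sigma}^{2}_{\text{B}}\ge 0$. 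Hence $g$, and therefore $C_{\text{B},l}(v,\cdot)$, is strictly decreasing in $t$; since $t\ge 0$ with $t=0$ at $\mathbf{w}=\mathbf{0}$, we get $C_{\text{B},l}(v,\mathbf{w})\le C_{\text{B},l}(v,\mathbf{0})$ for all $\mathbf{w}$.

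I expect this monotonicity step to be the only real subtlety: one must verify that injecting AN genuinely degrades Bob's rate lower bound rather than helping it, and this comes down to the numerical fact $\pi e>6$, i.e.\ the coefficient $2$ multiplying $t$ in the numerator of $g$ being smaller than the coefficient $\tfrac{\pi e}{3}$ in its denominator. Intuitively, a single AN precoder $\mathbf{w}$ cannot in general be taken orthogonal to $\overline{\mathbf{h}}_{\text{B}}$, so the jamming signal leaks into Bob's own channel as extra interference while also consuming power, and both effects favor $\mathbf{w}=\mathbf{0}$.

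Finally I would assemble the two bounds. At $\mathbf{w}=\mathbf{0}$ the argument of the logarithm in \eqref{achievableRateBob} equals $1+\tfrac{2a}{\pi e\overline{\sigma}^{2}_{\text{B}}}\ge 1$, so $C_{\text{B},l}(v,\mathbf{0})\ge 0$ and hence $\Phi_{\text{B}}(v,\mathbf{0})\ge 0$. Now fix a feasible $(v,\mathbf{w})$. If $C_{\text{B},l}(v,\mathbf{w})\le 0$ then $\Phi_{\text{B}}(v,\mathbf{w})\le 0\le\Phi_{\text{B}}(v,\mathbf{0})$; if instead $C_{\text{B},l}(v,\mathbf{w})>0$, then dividing the inequality $0<C_{\text{B},l}(v,\mathbf{w})\le C_{\text{B},l}(v,\mathbf{0})$ by $0<P_{\text{circuit}}+\sum_{N_T}U_{\text{LEDs}}I_{\text{DC}}+\zeta v^{2}\le P_{\text{circuit}}+\sum_{N_T}U_{\text{LEDs}}I_{\text{DC}}+\zeta\left(v^{2}+\lVert\mathbf{w}\rVert^{2}\right)$ gives $\Phi_{\text{B}}(v,\mathbf{w})\le\Phi_{\text{B}}(v,\mathbf{0})$. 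Thus no $\mathbf{w}\ne\mathbf{0}$ can outperform $\mathbf{w}=\mathbf{0}$, so $\mathbf{w}^{*}=\mathbf{0}$; and since the optimal value is strictly positive (already positive at $v=\Delta_{\text{DC}}$, $\mathbf{w}=\mathbf{0}$), equality above forces the denominator inequality to be tight at any optimizer, so $\mathbf{w}^{*}=\mathbf{0}$ necessarily. The identical argument, with $\lVert\mathbf{w}\rVert^{2}$ and $\left(\overline{\mathbf{h}}^{T}_{\text{B}}\mathbf{w}\right)^{2}$ replaced by their MISO counterparts, covers the AN-aided MISO scheme.
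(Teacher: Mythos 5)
Your proof is correct and follows essentially the same route as the paper's Appendix~A: both reduce the dependence of $C_{\text{B},l}$ on $\mathbf{w}$ to the scalar $\left(\overline{\mathbf{h}}^T_{\text{B}}\mathbf{w}\right)^2$, show via the same derivative-sign computation (hinging on $\pi e>6$) that the rate is strictly decreasing in that quantity, and observe that $\mathbf{w}=\mathbf{0}$ simultaneously maximizes the numerator and minimizes the power in the denominator. Your explicit handling of the case $C_{\text{B},l}(v,\mathbf{w})\le 0$ before dividing the two inequalities is a small but welcome refinement that the paper's version glosses over.
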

\begin{proof}
An intuitive proof is that using AN does not improve the achievable rate of Bob's channel while increasing energy consumption. Thus, an AN-aided transmission reduces the EE, which shows that the no-AN transmission (i.e., $\mathbf{w} = \mathbf{0}$) is optimal. A rigorous proof is given in Appendix A. 
\end{proof}
\textbf{Proposition 1} shows that in order to achieve the maximal $\Phi_{\text{B}}(v, \mathbf{w})$, jammers should be inactive. However, this may not lead to an improved SEE because without being interfered with by the AN, Eves can enjoy high-quality communication channels (thus, lowering the achievable secrecy rate). To enable the use of AN, one can set a constraint to ensure that a certain amount of power is allocated to generating the AN signal, for example, $\lVert\mathbf{w}\rVert^2_2 \geq P_{\text{th}}$ where $P_{\text{th}} > 0$ is the minimal allocated power threshold. The choice of $P_{\text{th}}$ is critical as due to \eqref{P1:constraint2}, this power allocation renders the design problem infeasible if $P_{\text{th}} > (N_T - 1)\Delta_{\text{DC}}^2$. Moreover, although increasing $P_{\text{th}}$ decreases Eves' SINRs, it potentially degrades the quality of Bob's channel as well and increases the total consumption power. Consequently, the overall SEE might be reduced. To prevent too much power from being allocated to the AN and guarantee that the quality of Bob's channel is above a desirable limit, a constraint on Bob's SINR is taken into consideration. Specifically, we study a modification 
for $\bf{\mathcal{P}1}$ as given below   
\begin{subequations}
\begin{alignat}{2}
\bf{\mathcal{P}2} \hspace{5mm} & \underset{v, \mathbf{w}}{\text{maximize}} &   \hspace{2mm}  & \Phi(v, \mathbf{w}) \label{P2:obj_func}\\
& \text{subject to} &	& \nonumber \\
&  & &  \frac{\left(h_{\text{B}}v\right)^2}{\left(\overline{\mathbf{h}}^T_{\text{B}}\mathbf{w}\right)^2 +   \overline{\sigma}^2_{\text{B}}} \geq \delta_{\text{B}}, \label{P2:constraint1} \\
&  & &   \lVert\mathbf{w}\rVert_2^2 \geq P_{\text{th}}, \label{P2:constraint2} \\
& & & \eqref{P1:constraint1},~\eqref{P1:constraint2}. \nonumber 
\end{alignat}
\end{subequations}
Here, $\delta_{\text{B}}$ is the minimum SINR required for Bob's channel. 

One should emphasize that the introduction of constraints in \eqref{P2:constraint1} and \eqref{P2:constraint2} may render $\bf{\mathcal{P}2}$ infeasible. Indeed, it can be easily verified that the problem is infeasible when $\delta_{\text{B}} > \frac{\left(h_{\text{B}}\Delta_{\text{DC}}\right)^2}{\overline{\sigma}^2_{\text{B}}}$.
Therefore, we assume that $\delta_{\text{B}}$ and $P_{\text{th}}$ are explicitly chosen so that $\bf{\mathcal{P}2}$ is feasible. It is seen that $\bf{\mathcal{P}2}$ is a fractional non-convex optimization problem. Hence, we employ the Dinkelbach algorithm, which is efficient in solving fractional programming to tackle the problem. The principle of the Dinkelbach algorithm is to get rid of the fractional objective function by introducing an equivalent non-fractional parametric objective function, which can be handled more conveniently. In particular to our problem, let $D(v, \mathbf{w}) = P_{\text{circuit}} + \sum_{N_T} U_{\text{LEDs}}I_{\text{DC}} + \zeta\left(v^2 + \lVert\mathbf{w}\rVert_2^2\right)$ and $\lambda \geq 0$ be the value of the objective function. The algorithm then iteratively solves the following parametric problem 
\begin{subequations}
\begin{alignat}{2}
\bf{\mathcal{P}3}(\lambda) \hspace{5mm} & \underset{v, \mathbf{w}}{\text{maximize}} &  \hspace{2mm} & C_{\text{B}}(v, \mathbf{w}) - \lambda D(v, \mathbf{w}) \label{P3:obj_func}\\
& \text{subject to} &		& \nonumber \\
&  &  & \eqref{P2:constraint1},~\eqref{P2:constraint2}, ~\eqref{P1:constraint1},~\eqref{P1:constraint2}. \nonumber 
\end{alignat}
\label{P3-parametric}
\end{subequations}
In each iteration, say the $i$th iteration, a new $\lambda^{(i)} = \frac{C_{\text{B}}\left(v^{{(i)}^*}, \mathbf{w}^{{(i)}^*}\right)}{D\left(v^{{(i)}^*}, \mathbf{w}^{{(i)}^*}\right)}$ is updated, where $v^{{(i)^*}}$ and $\mathbf{w}^{{(i)^*}}$
are the iteration's optimal solutions to $v$ and $\mathbf{w}$, respectively. The algorithm terminates when $C_{\text{B}}\left(v^{{(i)^*}}, \mathbf{w}^{{(i)^*}}\right) - \lambda^{(i)}D\left(v^{{(i)^*}}, \mathbf{w}^{{(i)^*}}\right)$ converges to a predefined sufficiently low value. The algorithm is summarized as follows.  
\begin{algorithm2e}[ht]
\SetAlgoLined 
\caption{Dinkelbach-type algorithm}
Choose the maximum number of iterations $L_{\text{Din}}$ and the error tolerance $\epsilon_1 > 0$. \\
Set $i \leftarrow 1$, $\lambda^{(0)} \leftarrow 0$. \\
\While{\rm{convergence} = \textbf{False} and $i \leq L_{\text{Din}}$}{
Given $\lambda^{(i-1)}$ from the previous iteration, solve ${\bf{\mathcal{P}3}}\left(\lambda^{(i-1)}\right)$ for $v^{{(i)^*}}$ and $\mathbf{w}^{{(i)^*}}$. \\
Update $\lambda^{(i)} \leftarrow \frac{C_{\text{B}}\left(v^{{(i)^*}}, \mathbf{w}^{{(i)^*}}\right)}{D\left(v^{{(i)^*}}, \mathbf{w}^{{(i)^*}}\right)}$. \\
\eIf{$C_{\text{B}}\left(v^{{(i)^*}}, \mathbf{w}^{{(i)^*}}\right)- \lambda^{(i)}D\left(v^{{(i)^*}}, \mathbf{w}^{{(i)^*}}\right) \leq \epsilon_{1}$}{convergence $\leftarrow$ \textbf{True}\\
$v^{*} \leftarrow v^{{(i)^*}}$,~ 
$\mathbf{w}^{*} \leftarrow \mathbf{w}^{{(i)^*}}$ \\
}{
convergence $\leftarrow \textbf{False}$}
$i \leftarrow i + 1$
}
\label{DinkebackAlgo1}
\end{algorithm2e}
\subsubsection{General AN-aided SISO}
In each iteration of \textbf{Algorithm \ref{DinkebackAlgo1}}, it is required to solve $\bf{\mathcal{P}3}(\lambda)$, which  is not a convex optimization problem due to the non-concavity of the objective function and non-convexity of \eqref{P2:constraint1} and \eqref{P2:constraint2}. 
Notice that \eqref{P2:constraint1} can be convexified via an observation that there exists an optimal solution $v^*$ satisfying that $h_{\text{B}}v^* \geq 0$. Indeed, it is obvious that if there is any optimal solution $v^{\dagger}$ to $\bf{\mathcal{P}3}(\lambda)$ that $h_{\text{B}}v^{\dagger} \leq 0$, then $v^{*} = -v^{\dagger}$ is also feasible and offers the same objective value. 
Hence, $v^*$ is also optimal. As a result, \eqref{P2:constraint1} is equivalently rewritten as 
\begin{align}
\frac{1}{\sqrt{\delta_{\text{B}}}}h_{\text{B}}v \geq \sqrt{\left(\overline{\mathbf{h}}^T_{\text{B}}\mathbf{w}\right)^2 +   \overline{\sigma}^2_{\text{B}}},
\end{align}
which is convex. The non-concavity of the objective function is due to the fact that $C_{\text{B}}(v, \mathbf{w})$ is not concave. To overcome this problem, one can make use of the following variable transformations 
\begin{align}
&c_{\text{B}, 1} = \log_2\left(2\left(\left(h_{\text{B}}v\right)^2 + \left(\overline{\mathbf{h}}^T_{\text{B}}\mathbf{w}\right)^2\right) + \pi e\overline{\sigma}^2_{\text{B}}\right), \label{cB1}\\
&c_{\text{B}, 2} = \log_2\left(\pi e \left(\frac{1}{3}{\left(\overline{\mathbf{h}}^T_{\text{B}}\mathbf{w}\right)^2} + \overline{\sigma}^2_{\text{B}}\right)\right), \label{cB2}\\
&p_{\text{B}, 1} = \left(h_{\text{B}}v\right)^2 + \left(\overline{\mathbf{h}}^T_{\text{B}}\mathbf{w}\right)^2, \label{pB1}\\
&p_{\text{B}, 2} = \left(\overline{\mathbf{h}}^T_{\text{B}}\mathbf{w}\right)^2 \label{pB2}. 
\end{align}
Accordingly, we reformulate $\bf{\mathcal{P}3}(\lambda)$ as 
\begin{subequations}
\begin{alignat}{2}
{\bf{\mathcal{P}4}}(\lambda) \hspace{5mm} & \underset{v, \mathbf{w}, c_{\text{B}, 1},  c_{\text{B}, 2}, p_{\text{B}, 1}, p_{\text{B}, 2}}{\text{maximize}}  & \hspace{2mm} & \frac{1}{2}\left(c_{\text{B}, 1} - c_{\text{B}, 2}\right) - \lambda D(v, \mathbf{w}) \label{P4:obj_func}\\
& \text{subject to} &		& \nonumber \\
& & &c_{\text{B}, 1} \leq \log_2\left(2p_{\text{B}, 1} + \pi e \overline{\sigma}^2_{\text{B}}\right), \label{P4:constraint1} \\
& & &p_{\text{B}, 1} \leq \left(h_{\text{B}}v\right)^2 + \left(\overline{\mathbf{h}}^T_{\text{B}}\mathbf{w}\right)^2, \label{P4:constraint2} \\
& & &c_{\text{B}, 2} \geq \log_2\left(\pi e \left(\frac{1}{3}{p_{\text{B}, 2}} + \overline{\sigma}^2_{\text{B}}\right)\right), \label{P4:constraint3} \\
& & & p_{\text{B}, 2} \geq \left(\overline{\mathbf{h}}^T_{\text{B}}\mathbf{w}\right)^2, \label{P4:constraint4} \\
& &  &  \frac{1}{\sqrt{\delta_{\text{B}}}}h_{\text{B}}v \geq \sqrt{\left(\overline{\mathbf{h}}^T_{\text{B}}\mathbf{w}\right)^2 +   \overline{\sigma}^2_{\text{B}}},
 \label{P4:constraint5} \\
&  &  & \eqref{P2:constraint2}, \eqref{P1:constraint1}, \eqref{P1:constraint2}. \nonumber 
\end{alignat}
\end{subequations}
Note that without lost of optimality, the equalities in \eqref{cB1}-\eqref{pB2} are replaced by their corresponding inequalities in \eqref{P4:constraint1}-\eqref{P4:constraint4} of $\mathbf{\mathcal{P}4}(\lambda)$. Indeed, since the objective in \eqref{P4:obj_func} is monotically increasing with $c_{\text{B}, 1}$, at the opitmal solution, $c_{\text{B}, 1}$ must attain its maximum value, which implies that \eqref{P4:constraint1} and \eqref{P4:constraint2} hold at equality. The same argument can be used to validate \eqref{P4:constraint3} and \eqref{P4:constraint4}.
Now, all constraints except \eqref{P4:constraint2}, \eqref{P4:constraint3}, and \eqref{P2:constraint2} are convex. Observe that one side in each of these inequality constraints is either constant or linear. This enables the use of Taylor expansion to approximately linearize the constraints. Then, the CCP is used to successively solve the problem. Specifically, at the $j$th iteration of the CCP, the following first-order Taylor approximations are used
\begin{align}
\left(h_{\text{B}}v^{(j)}\right)^2 + \left(\overline{\mathbf{h}}^T_{\text{B}}\mathbf{w}\right)^2 \geq  & \left(h_{\text{B}}v^{(j-1)}\right)^2 + 2h^2_{\text{B}}v^{(j-1)}\left(v^{(j)}  - v^{(j-1)}\right)
 + \left(\overline{\mathbf{h}}^T_{\text{B}}\mathbf{w}^{(j-1)}\right)^2  \nonumber \\ & +  2\left[\mathbf{w}^{(j-1)}\right]^{T}\overline{\mathbf{h}}_{\text{B}}\overline{\mathbf{h}}^T_{\text{B}}\left(\mathbf{w}^{(j)} - \mathbf{w}^{(j-1)}\right),
\label{TaylorApprox1}
\end{align}
\begin{align}
\log_2\left(\pi e \left(\frac{1}{3}{p_{\text{B,2}}} + \overline{\sigma}^2_{\text{B}}\right)\right)  \leq \log_2\left(\!\pi e \left(\frac{1}{3}{p^{(j-1)}_{\text{B}, 2}} + \overline{\sigma}^2_{\text{B}}\right)\!\right)  + \frac{p^{(j)}_{\text{B}, 2} - p^{(j-1)}_{\text{B}, 2}}{\ln(2)\left(\!\pi e \left(\frac{1}{3}{p^{(j-1)}_{\text{B}, 2}} + \overline{\sigma}^2_{\text{B}}\right)\!\right)},
\label{TaylorApprox2}
\end{align}
\begin{align}
\left\lVert\mathbf{w}^{(j)}\right\rVert_2^2 \geq \left\lVert\mathbf{w}^{(j-1)}\right\rVert_2^2 + 2\left[\mathbf{w}^{(j-1)}\right]^T\left(\mathbf{w}^{(j)} - \mathbf{w}^{(j-1)}\right),
\label{TaylorApprox3}
\end{align}
where $v^{(j-1)}$, $\mathbf{w}^{(j-1)}$, and $p^{(j-1)}_{\text{B, 2}}$ are the solutions to $v$, $\mathbf{w}$, and $p_{\text{B, 2}}$ in the $(j-1)$th iteration, respectively. Replacing \eqref{P4:constraint2}, \eqref{P4:constraint3}, and \eqref{P2:constraint2} by their corresponding constraints using \eqref{TaylorApprox1}, \eqref{TaylorApprox2}, and \eqref{TaylorApprox3} results in 
\begin{subequations}
\begin{alignat}{2}
{\bf{\mathcal{P}5}}^{(j)}(\lambda) \hspace{2mm} & \underset{\substack{v^{(j)}, \mathbf{w}^{(j)}, c_{\text{B}, 1}, \\ c_{\text{B}, 2}, p_{\text{B}, 1}, p^{(j)}_{\text{B}, 2}}}{\text{maximize}}  & \hspace{2mm} & \frac{1}{2}\left(c_{\text{B}, 1} - c_{\text{B}, 2}\right) - \lambda D\left(v^{(j)}, \mathbf{w}^{(j)}\right) \label{P5:obj_func}\\
& \text{subject to} &		& \nonumber \\
& & &\left(h_{\text{B}}v^{(j-1)}\right)^2 +  2h_{\text{B}}\left(v^{(j)} - v^{(j-1)}\right) \nonumber \\ & & & + \left(\overline{\mathbf{h}}^T_{\text{B}}\mathbf{w}^{(j-1)}\right)   + 2\left[\mathbf{w}^{(j-1)}\right]^{T}\overline{\mathbf{h}}_{\text{B}}\overline{\mathbf{h}}^T_{\text{B}}\left(\mathbf{w}^{(j)} - \mathbf{w}^{(j-1)}\right)  \geq p_{\text{B, 1}}, \label{P5:constraint1} \\
& & & c_{\text{B, 2}} \geq \log_2\left(\!\pi e \left(\frac{1}{3}{p^{(j-1)}_{\text{B}, 2}} + \overline{\sigma}^2_{\text{B}}\right)\!\right) + \frac{p^{(j)}_{\text{B}, 2} - p^{(j-1)}_{\text{B}, 2}}{\ln(2)\left(\!\pi e \left(\frac{1}{3}{p^{(j-1)}_{\text{B}, 2}} + \overline{\sigma}^2_{\text{B}}\right)\!\right)}, \label{P5:constraint2} \\
& & &\left\lVert\mathbf{w}^{(j-1)}\right\rVert_2^2 + 2\left[\mathbf{w}^{(j-1)}\right]^T\!\!\left(\mathbf{w}^{(j)} - \mathbf{w}^{(j-1)}\right)  \geq P_{\text{th}}, \label{P5:constraint3} \\
&  &  & \eqref{P4:constraint1},~\eqref{P4:constraint4},~ \eqref{P4:constraint5},~\eqref{P1:constraint1},~\eqref{P1:constraint2}, \nonumber 
\end{alignat}
\end{subequations}
which is a convex optimization problem and thus can be solved efficiently using available software packages, e.g., CVX \cite{cvx}. Then a sub-optimal solution to $\bf{\mathcal{P}4}(\lambda)$ can be found via solving a sequence of ${\bf{\mathcal{P}5}}^{(j)}$ until a convergence criterion is satisfied. Note that due to the tighter constraints used in \eqref{P5:constraint1}-\eqref{P5:constraint2}, it is guaranteed that a solution to ${\bf{\mathcal{P}5}}^{(j)}(\lambda)$ is feasible to  ${\bf{\mathcal{P}5}}^{(j + 1)}(\lambda)$.
In summary, an CCP algorithm to solve ${\bf{\mathcal{P}4}}(\lambda)$ is described in \textbf{Algorithm 2}. 
\begin{algorithm2e}
\caption{CCP-type algorithm for solving ${\bf{\mathcal{P}4}}(\lambda)$}
Choose the maximum number of iterations $L_{\text{CCP}}$ and the error tolerance $\epsilon_2$. \\
Choose the starting points $v^{(0)}$,  $\mathbf{w}^{(0)}$, and $p^{(0)}_{\text{B, 2}}$ satisfying that ${\bf{\mathcal{P}5}}^{(0)}(\lambda)$ is feasible. \\ 
Set $j \leftarrow 1$. \\

\While{\rm{convergence} = \textbf{True} and $j < L_{\text{CCP}}$}{
Given $v^{(j-1)}$,  $\mathbf{w}^{(j-1)}$, and $p^{(j-1)}_{\text{B, 2}}$ obtained from the previous iteration, solve ${\bf{\mathcal{P}5}}^{(j)}(\lambda)$. \\
\eIf{$\frac{\left|v^{(j)} - v^{(j-1)}\right|}{v^{(j)}} \leq \epsilon_{2}$ {\rm{and}} $\frac{\left\lVert \mathbf{w}^{(i)} - \mathbf{w}^{(i-1)}\right\rVert_2}{\left\lVert\mathbf{w}^{(i)}\right\rVert_2} \leq \epsilon_{2}$ {\rm{and}} $\frac{\left|p^{(j)}_{\text{B, 2}} - p^{(j-1)}_{\text{B, 2}}\right|}{p^{(j)}_{\text{B}, 2}} \leq \epsilon_{2}$
}{convergence $\leftarrow$ \textbf{True}.\\
$v^{*} \leftarrow v^{(j)}$,~
$\mathbf{w}^{*} \leftarrow \mathbf{w}^{(j)}$,~ 
$p^{*}_{\text{B}, 2} \leftarrow p^{(j)}_{\text{B}, 2}$. \\
}{convergence $\leftarrow$ \textbf{False}.
} $j \leftarrow j + 1$. \\ 
}
\label{Alg2}
\end{algorithm2e}

\subsubsection{ZF AN-aided SISO} The presented general AN design requires two iterative procedures, which might incur a high computational complexity. In order to simplify the design, we present in this section a ZF design approach where the AN is constructed to lie on the null-space of $\overline{\mathbf{h}}_{\text{B}}^T$. This ensures that the generated AN does cause any interference on Bob's channel. This, however, also leads to a smaller search space of $\mathbf{w}$, possibly resulting in a reduced SEE performance. To this end, let $\widetilde{\mathbf{w}}$ be an orthogonal basic for the null-space of $\overline{\mathbf{h}}_\text{B}^T$. The AN precoder is then given by $\mathbf{w} = \sqrt{\phi}\widetilde{\mathbf{w}}$, where $\phi > 0$ is a scaling factor, which controls the AN power. Under this null-space strategy, the EE of Bob's channel is simplified to 
\begin{align}
    \Phi_{\text{B}}(v, \phi) = \frac{\log_2\left(1 + \frac{2\left(h_{\text{B}}v\right)^2}{\pi e \overline{\sigma}^2_{\text{B}}}\right)}{2\left(P_{\text{circuit}} + \sum_{N_T} U_{\text{LEDs}}I_{\text{DC}} + \zeta\left(v^2 + \phi\right)\right)},
    \label{BobEE-NullAN}
\end{align}

and the design problem becomes 
\begin{subequations}
\begin{alignat}{2}
\bf{\mathcal{P}6} \hspace{5mm} & \underset{v, \phi}{\text{maximize}} &  \hspace{2mm} &  \Phi_{\text{B}}(v, \phi) \label{P6:obj_func}\\
& \text{subject to} &		& \nonumber \\
& &  &  \frac{\left(h_{\text{B}}v\right)^2}{  \overline{\sigma}^2_{\text{B}}} \geq \delta_{\text{B}}, \label{P6:constraint1} \\
& &   & \phi  \geq P_{\text{th}}, \label{P6:constraint2} \\
&  &  & |v| \leq \Delta_{\text{DC}},  \label{P6:constraint3}\\ 
& &  &  \sqrt{\phi}  \leq \frac{\Delta_{\text{DC}}}{\lVert\widetilde{\mathbf{w}}\rVert_{\infty}}. \label{P6:constraint4} 
\end{alignat}
\end{subequations}
Note that \eqref{P6:constraint2} holds because $\left\lVert\widetilde{\mathbf{w}}\right\rVert_2 = 1$. Firstly, we can see that $\bf{\mathcal{P}6}$ is feasible if and only if the two conditions $\frac{\Delta_{\text{DC}}^2}{\left\lVert\widetilde{\mathbf{w}}\right\rVert^2_{\infty}} \geq {P_{\text{th}}}$ and $\Delta_{\text{DC}}^2 \geq \frac{\delta_{\text{B}}\overline{\sigma}^2_{\text{B}}}{h^2_{\text{B}}}$ are simultaneously satisfied. Assume that these conditions hold, because \eqref{BobEE-NullAN} is monotonically decreasing with $\phi$, the optimal solution to $\phi$ is obviously $\phi^* = P_{\text{th}}$. With this and by making change of variable $V = v^2$, $\bf{\mathcal{P}6}$ reduces to 
\begin{subequations}
\begin{alignat}{2}
\bf{\mathcal{P}7}(\phi^*) \hspace{5mm} & \underset{V \geq 0}{\text{maximize}}  & \hspace{2mm}  & \Phi_{\text{B}}(V, \phi^*) \label{P7:obj_func}\\
& \text{subject to} &		& \nonumber \\
& &   & \frac{h^2_{\text{B}}V}{  \overline{\sigma}^2_{\text{B}}} \geq \delta_{\text{B}}, \label{P7:constraint1} \\
& &  &  V \leq \Delta_{\text{DC}}^2,   \label{P7:constraint2}
\end{alignat}
\end{subequations}
 where $\Phi_{\text{B}}(V, \phi^*) = \frac{\log_2\left(1 + \frac{2h^2_{\text{B}}V}{\pi e \overline{\sigma}^2_{\text{B}}}\right)}{2\left(P_{\text{circuit}} + \sum_{N_T} U_{\text{LEDs}}I_{\text{DC}} + \zeta\left(V + \phi^*\right)\right)}$.
\begin{Proposition}
Assume that $\bf{\mathcal{P}7}(\phi^*)$ is feasible, a global solution $V^*$  exists and is unique. 
\end{Proposition}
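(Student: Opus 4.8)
The plan is to prove existence by a standard compactness argument and uniqueness by showing that the objective $\Phi_{\text{B}}(\cdot,\phi^*)$ is strictly unimodal on the feasible interval, so that it has a single maximizer regardless of whether that maximizer is interior or at an endpoint.

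\textbf{Existence.} First I would note that the feasible set of $\bf{\mathcal{P}7}(\phi^*)$ is the interval $\mathcal{V} = \left[\frac{\delta_{\text{B}}\overline{\sigma}^2_{\text{B}}}{h^2_{\text{B}}},\ \Delta_{\text{DC}}^2\right]$, which is closed and bounded, and nonempty precisely by the assumed feasibility condition $\Delta_{\text{DC}}^2 \geq \frac{\delta_{\text{B}}\overline{\sigma}^2_{\text{B}}}{h^2_{\text{B}}}$ established just above. On $\mathcal{V}$ the numerator $\log_2\!\left(1 + \frac{2h^2_{\text{B}}V}{\pi e \overline{\sigma}^2_{\text{B}}}\right)$ is continuous and the denominator $2\!\left(P_{\text{circuit}} + \sum_{N_T} U_{\text{LEDs}}I_{\text{DC}} + \zeta\left(V + \phi^*\right)\right)$ is continuous and strictly positive, so $\Phi_{\text{B}}(\cdot,\phi^*)$ is continuous on a compact set. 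By the Weierstrass extreme value theorem a global maximizer $V^*$ exists.

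\textbf{Uniqueness.} I would write $\Phi_{\text{B}}(V,\phi^*) = \frac{n(V)}{d(V)}$ with $n(V) = \log_2\!\left(1 + aV\right)$, $a = \frac{2h^2_{\text{B}}}{\pi e \overline{\sigma}^2_{\text{B}}} > 0$, and $d(V) = 2\!\left(P_{\text{circuit}} + \sum_{N_T} U_{\text{LEDs}}I_{\text{DC}} + \zeta\left(V + \phi^*\right)\right) > 0$, noting that $d$ is affine with $d'(V) = 2\zeta > 0$. The sign of $\Phi_{\text{B}}'(V)$ equals the sign of $\psi(V) := n'(V)d(V) - n(V)d'(V)$, and since $d'' \equiv 0$ one gets $\psi'(V) = n''(V)d(V)$. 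Because $n$ is strictly concave, $n''(V) = -\frac{a^2}{\ln(2)\,(1+aV)^2} < 0$, so $\psi'(V) < 0$ on $\mathcal{V}$; hence $\psi$ is strictly decreasing and changes sign at most once, and only from positive to negative. Therefore $\Phi_{\text{B}}(\cdot,\phi^*)$ is strictly increasing on an initial subinterval of $\mathcal{V}$ and strictly decreasing on the remainder (either phase possibly empty), i.e., it is strictly unimodal, and a strictly unimodal continuous function on a closed interval attains its maximum at exactly one point. This gives uniqueness of $V^*$.

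\textbf{Main obstacle.} The only subtle point is the uniqueness bookkeeping: the maximizer can lie at either endpoint of $\mathcal{V}$ rather than at an interior stationary point, so one cannot simply set a derivative to zero. The strict monotonicity of $\psi$ resolves all cases at once — an interior maximizer at the unique zero of $\psi$, or a boundary maximizer when $\psi$ keeps a constant sign — and it follows directly from the elementary fact that for a concave-over-affine ratio the sign-defining function $\psi = n'd - nd'$ inherits strict monotonicity from the strict concavity of the numerator. Everything else is routine.
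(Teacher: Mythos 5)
Your proof is correct and follows essentially the same route as the paper: your $\psi = n'd - nd'$ is exactly the paper's function $f(V,\phi^*)$ (the numerator of $\Phi_{\text{B}}'$), and your observation that $\psi' = n''d < 0$ is the same strict-monotonicity argument the paper uses before splitting into the three endpoint/interior cases, which you cover with the ``either phase possibly empty'' remark. The only additions are the explicit Weierstrass existence step and the cleaner concave-over-affine framing, both of which are fine but not a different method.
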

\begin{proof}
The proof is given in Appendix B. 
\end{proof}
The proof of  $\textbf{Proposition 2}$ shows that  
there is no need to utilize the Dinkelbach algorithm and the CCP to solve $\mathcal{P}\mathbf{6}$. In the worst-case scenario, solving $\mathcal{P}\mathbf{6}$ only requires a bisection procedure, which generally results in a considerably reduced computational complexity. Note that the same solving approach presented in the proof of the proposition may not be possible for the ZF AN-aided MISO scheme because of the vector form of the precoder $\mathbf{v}$. Instead, a combination of the Dinkelbach algorithm and the CCP is required to solve the design problem.  Nonetheless, the complexity can still be reduced due to the smaller search space of $\mathbf{w}$ when the ZF constraint is applied. 
\subsection{Known Eves' CSI}
Recall that when Eves' CSI is known at the transmitter, we consider a direct design to maximize the minimum SEE.  
Following the expression in \eqref{lowerboundSR_k}, the AN design problem, in this case, is formulated as 
\begin{subequations}
\begin{alignat}{2}
{\bf{\mathcal{P}8}} \hspace{5mm} & \underset{v, \mathbf{W}}{\text{maximize}}   & \hspace{2mm} & \widetilde{\Phi}^{\text{min}}_s(v, \mathbf{W}) \label{P8:obj_func}\\
& \text{subject to} &		& \nonumber \\
& & &|v|  \leq \Delta_{\text{DC}}, \label{P8:constraint1} \\ 
& & &\left\lVert\mathbf{W}\right\rVert_{\infty} \leq \Delta_{\text{DC}}.
\label{P8:constraint2}
\end{alignat}
\label{P8-parametric}
\end{subequations}
It is seen that $\mathcal{P}\textbf{8}$ is a nonconvex maximin optimization problem. A standard technique to handle the problem is to introduce a slack variable $t = \widetilde{\Phi}^{\text{min}}_s(v, \mathbf{W})$ and then transform  $\mathcal{P}\textbf{8}$ to 
\begin{subequations}
\begin{alignat}{2}
{\bf{\mathcal{P}9}} \hspace{5mm} & \underset{v, \mathbf{W}, t}{\text{maximize}}   \hspace{2mm}  t \label{P9:obj_func} \\
& \text{subject to} &	  & \nonumber \\
&  C^k_s(v, \mathbf{W}) \geq t\left(P_{\text{circuit}} + \sum_{N_T} U_{\text{LEDs}}I_{\text{DC}} + \zeta\left(v^2 + \text{tr}\left(\mathbf{W}\mathbf{W}^T\right)\right)\!\!\right) ~~\forall k = 1, 2, ..., K, \label{P9:constraint1} \\
& \eqref{P8:constraint1},~ \eqref{P8:constraint2}. \nonumber
\end{alignat}
\label{P9-parametric}
\end{subequations}
The use of the slack variable $t$ results in additional constraints $\widetilde{\Phi}_s(v, \mathbf{W}) \geq t, \hspace{2pt} \forall k = 1, 2, ..., K$,  which are equivalently represented in \eqref{P9:constraint1}, 
which is non-convex and difficult to handle due to the product between optimizing variables on the right-hand side. It is, therefore, challenging to directly solve ${\bf{\mathcal{P}9}}$. However, one can transform ${\bf{\mathcal{P}9}}$ into a feasibility problem by fixing the value of $t$. Then, a solution to the original problem can be found using the bisection method. Specifically, given a fixed value of $t$, 
\begin{subequations}
\begin{alignat}{2}
{{\mathcal{P}\bf{10}}(t)} \hspace{3mm} & \text{find}   \hspace{2mm}  v, \mathbf{W} \label{P10:obj_func} \\
& \text{subject to} &	  & \nonumber \\
&  C^k_s(v, \mathbf{W}) \geq t\left(P_{\text{circuit}} + \sum_{N_T} U_{\text{LEDs}}I_{\text{DC}} + \zeta\left(v^2 + \text{tr}\left(\mathbf{W}\mathbf{W}^T\right)\right)\!\!\right)  ~~\forall k = 1, 2, ..., K, \label{P10:constraint1} \\
& \eqref{P8:constraint1}, ~\eqref{P8:constraint2}. \nonumber 
\end{alignat}
\label{P10-parametric}
\end{subequations}
Similar to \eqref{cB1}-\eqref{pB2}, we make use of the following variable transformations
\begin{align}
&c_{\text{B}, 1} = \log_2\left(2\left(\left(h_{\text{B}}v\right)^2 + \left\lVert\overline{\mathbf{h}}^T_{\text{B}}\mathbf{W}\right\rVert^2\right) + \pi e\overline{\sigma}^2_{\text{B}}\right), \label{cB1_k} \\
&c_{\text{B}, 2} = \log_2\left(\pi e \left(\frac{1}{3}{\left\lVert\overline{\mathbf{h}}^T_{\text{B}}\mathbf{W}\right\rVert^2} + \overline{\sigma}^2_{\text{B}}\right)\right), \label{cB2_k}\\
&p_{\text{B}, 1} = \left(h_{\text{B}}v\right)^2 + \left\lVert\overline{\mathbf{h}}^T_{\text{B}}\mathbf{W}\right\rVert^2, \label{pB1_k}\\
&p_{\text{B}, 2} = \left\lVert\overline{\mathbf{h}}^T_{\text{B}}\mathbf{W}\right\rVert^2, \label{pB2_k} \\
&c^k_{\text{E}, 1} = \log_2\left(\pi e\left(\frac{1}{3}\left(\left(h_{\text{E}, k}v\right)^2 + \left\lVert\overline{\mathbf{h}}^T_{\text{E}, k}\mathbf{W}\right\rVert^2\right) + \overline{\sigma}^2_{\text{E}, k}\right)\right), \label{cE1_k}
\end{align}
\begin{align}
&c^k_{\text{E}, 2} = \log_2 \left(2{\left\lVert\overline{\mathbf{h}}^T_{\text{E}, k}\mathbf{W}\right\rVert^2} + \pi e\overline{\sigma}^2_{\text{E}, k}\right), \label{cE2_k}\\
&p^k_{\text{E}, 1} = \left(h_{\text{E}, k}v\right)^2 + \left\lVert\overline{\mathbf{h}}^T_{\text{E}, k}\mathbf{W}\right\rVert^2, \label{pE1_k}\\
&p^k_{\text{E}, 2} = \left\lVert\overline{\mathbf{h}}^T_{\text{E}, k}\mathbf{W}\right\rVert^2 \label{pE2_k}. 
\end{align}
and then transform $\mathcal{P}{\bf{10}}(t)$ into the following
\begin{subequations}
\begin{alignat}{2}
{\bf{\mathcal{P}11}}(t) \hspace{5mm} & \text{find}   \hspace{2mm}  v, \mathbf{W}, c_{\text{B}, 1}, c_{\text{B}, 2}, p_{\text{B}, 1}, p_{\text{B}, 2}, c^k_{\text{E}, 1}, c^k_{\text{E}, 2}, p^k_{\text{E}, 1}, p^k_{\text{E}, 2} \label{P11:obj_func} \\
& \text{subject to} &		  & \nonumber \\
&  \frac{1}{2}\left(c_{\text{B}, 1} - c_{\text{B}, 2} - c^k_{\text{E}, 1} + c^k_{\text{E}, 2}\right) \geq t\left(P_{\text{circuit}} + \sum_{N_T} U_{\text{LEDs}}I_{\text{DC}} + \zeta\left(v^2 + \text{tr}\left(\mathbf{W}\mathbf{W}^T\right)\right)\!\!\right),  \label{P11:constraint1} \\
& c_{\text{B}, 1} \leq \log_2\left(2p_{\text{B}, 1} + \pi e \overline{\sigma}^2_{\text{B}}\right), \label{P11:constraint3} \\
& p_{\text{B}, 1} \leq \left(h_{\text{B}}v\right)^2 + \left\lVert\overline{\mathbf{h}}^T_{\text{B}}\mathbf{W}\right\rVert^2, \label{P11:constraint4} \\
& c_{\text{B}, 2} \geq \log_2\left(\pi e \left(\frac{1}{3}p_{\text{B}, 2} + \overline{\sigma}^2_{\text{B}}\right)\right), \label{P11:constraint5} \\
& p_{\text{B}, 2} \geq \left\lVert\overline{\mathbf{h}}^T_{\text{B}}\mathbf{W}\right\rVert^2, \label{P11:constraint6} \\ 
& c^k_{\text{E}, 1} \geq \log_2\left(\pi e\left(\frac{1}{3}p^k_{\text{E}, 1} + \overline{\sigma}^2_{\text{E}, k}\right)\right), \label{P11:constraint7} \\
& p^k_{\text{E}, 1} \geq \left(h_{\text{E}, k}v\right)^2 + \left\lVert\overline{\mathbf{h}}^T_{\text{E}, k}\mathbf{W}\right\rVert^2, \label{P11:constraint8} \\
& c^k_{\text{E}, 2} \leq \log_2\left(2p^k_{\text{E}, 2} + \pi e\overline{\sigma}^2_{\text{E}, k}\right), \label{P11:constraint9} \\
& p^k_{\text{E}, 2} \leq \left\lVert\overline{\mathbf{h}}^T_{\text{E}, k}\mathbf{W}\right\rVert^2, \label{P11:constraint10} \\ 
& \eqref{P8:constraint1},~ \eqref{P8:constraint2}. \nonumber 
\end{alignat}
\label{P11-parametric}
\end{subequations}
It is seen that constraint \eqref{P11:constraint3}, \eqref{P11:constraint6}, \eqref{P11:constraint8}, \eqref{P11:constraint9}, \eqref{P8:constraint1}, and \eqref{P8:constraint2} are convex while the rest are not. Using the first-order Taylor expansion to approximately convexify those non-convex constraints results in
\begin{align}
\left(h_{\text{B}}v\right)^2 + \left\lVert\overline{\mathbf{h}}^T_{\text{B}}\mathbf{W}\right\rVert^2 \geq  & \left(h_{\text{B}}v^{(j-1)}\right)^2 + 2h^2_{\text{B}}v^{(j-1)}\left(v^{(j)}  - v^{(j-1)}\right)
 + \left\lVert\overline{\mathbf{h}}^T_{\text{B}}\mathbf{W}^{(j-1)}\right\rVert^2  \nonumber \\ & + 
 2\overline{\mathbf{h}}^T_{\text{B}}\left(\mathbf{W}^{(j)} - \mathbf{W}^{(j-1)}\right)\left[\mathbf{W}^{(j-1)}\right]^T\overline{\mathbf{h}}_{\text{B}},
\label{TaylorApprox11-1}
\end{align}
\begin{align}
\log_2\left(\pi e \left(\frac{1}{3}{p_{\text{B,2}}} + \overline{\sigma}^2_{\text{B}}\right)\right)  \leq \log_2\left(\!\pi e \left(\frac{1}{3}{p^{{(j-1)}}_{\text{B}, 2}} + \overline{\sigma}^2_{\text{B}}\right)\!\right)  + \frac{p^{{(j)}}_{\text{B}, 2} - p^{{(j-1)}}_{\text{B}, 2}}{\ln(2)\left(\!\pi e \left(\frac{1}{3}{p^{{(j-1)}}_{\text{B}, 2}} + \overline{\sigma}^2_{\text{B}}\right)\!\right)},
\label{TaylorApprox11-2}
\end{align}
\begin{align}
\log_2\left(\pi e\left(\frac{1}{3}p^k_{\text{E}, 1} +  \overline{\sigma}^2_{\text{E}, k}\right)\right)  \leq \log_2\left(\!\pi e\left(\frac{1}{3}p^{k^{(j-1)}}_{\text{E}, 1} +  \overline{\sigma}^2_{\text{E}, k}\!\right)\right)  + \frac{p^{k^{(j)}}_{\text{E}, 1} - p^{k^{(j-1)}}_{\text{E}, 1}}{\ln(2)\left(\pi e\left(\frac{1}{3}p^{k^{(j-1)}}_{\text{E}, 1} +  \overline{\sigma}^2_{\text{E}, k}\right)\right)},
\label{TaylorApprox11-3}
\end{align}
\begin{align}
\left\lVert\overline{\mathbf{h}}^T_{\text{E}, k}\mathbf{W}\right\rVert^2 \geq  \left\lVert\overline{\mathbf{h}}^T_{\text{E}, k}\mathbf{W}^{(j-1)}\right\rVert^2  +  2\overline{\mathbf{h}}^T_{\text{E}, k}\left(\mathbf{W}^{(j)} - \mathbf{W}^{(j-1)}\right)\left[\mathbf{W}^{(j-1)}\right]^{T}\overline{\mathbf{h}}_{\text{E}, k},
\label{TaylorApprox11-4}    
\end{align}
where $v^{(j)}$, $\mathbf{W}^{(j)}$, $p^{{(j)}}_{\text{B}, 2}$, and $p^{k^{(j)}}_{\text{E}, 1}$ are the solutions to $v$, $\mathbf{W}$, $p_{\text{B}, 2}$, and $p_{\text{E}, 1}$ at the $j$-th iteration of the CCP procedure. Replacing the above approximating terms to \eqref{P11:constraint4}, \eqref{P11:constraint5}, \eqref{P11:constraint7}, and \eqref{P11:constraint10}, respectively, resulting in  
\begin{subequations}
\begin{alignat}{2}
{\bf{\mathcal{P}12}}(t) \hspace{5mm} & \text{find}   \hspace{2mm}  v^{(j)}, \mathbf{W}^{(j)}, c_{\text{B}, 1}, c_{\text{B}, 2}, p_{\text{B}, 1}, p^{{(j)}}_{\text{B}, 2}, c^k_{\text{E}, 1}, c^k_{\text{E}, 2}, p^{k^{(j)}}_{\text{E}, 1}, p^k_{\text{E}, 2} \label{P12:obj_func} \\
& \text{subject to} &		  & \nonumber \\
& p_{\text{B}, 1} \leq \left(h_{\text{B}}v^{(j-1)}\right)^2 + 2h^2_{\text{B}}v^{(j-1)}\left(v^{(j)}  - v^{(j-1)}\right)
 + \left\lVert\overline{\mathbf{h}}^T_{\text{B}}\mathbf{W}^{(j-1)}\right\rVert^2  \\ & \nonumber  \hspace{11mm} +  2\overline{\mathbf{h}}^T_{\text{B}}\left(\mathbf{W}^{(j)} - \mathbf{W}^{(j-1)}\right)\left[\mathbf{W}^{(j-1)}\right]^T\overline{\mathbf{h}}_{\text{B}},  \\
& c_{\text{B}, 2} \geq  \log_2\left(\!\pi e \left(\frac{1}{3}{p^{{(j-1)}}_{\text{B}, 2}} + \overline{\sigma}^2_{\text{B}}\right)\!\right)  + \frac{p^{{(j)}}_{\text{B}, 2} - p^{{(j-1)}}_{\text{B}, 2}}{\ln(2)\left(\!\pi e \left(\frac{1}{3}{p^{{(j-1)}}_{\text{B}, 2}} + \overline{\sigma}^2_{\text{B}}\right)\!\right)},\label{P12:constraint2} \\
& c^k_{\text{E}, 1} \geq  \log_2\left(\!\pi e\left(\frac{1}{3}p^{k^{(j-1)}}_{\text{E}, 1} +  \overline{\sigma}^2_{\text{E}, k}\!\right)\right)  + \frac{p^{k^{(j)}}_{\text{E}, 1} - p^{k^{(j-1)}}_{\text{E}, 1}}{\ln(2)\left(\pi e\left(\frac{1}{3}p^{k^{(j-1)}}_{\text{E}, 1} +  \overline{\sigma}^2_{\text{E}, k}\right)\right)},\label{P12:constraint3} \\
& p^k_{\text{E}, 2} \leq   \left\lVert\overline{\mathbf{h}}^T_{\text{E}, k}\mathbf{W}^{(j-1)}\right\rVert^2  +  2\overline{\mathbf{h}}^T_{\text{E}, k}\left(\mathbf{W}^{(j)} - \mathbf{W}^{(j-1)}\right)\left[\mathbf{W}^{(j-1)}\right]^{T}\overline{\mathbf{h}}_{\text{E}, k},\label{P12:constraint4} \\
& \eqref{P11:constraint1}, 
\eqref{P11:constraint3},~\eqref{P11:constraint6}, ~\eqref{P11:constraint8},~\eqref{P11:constraint9},~\eqref{P8:constraint1},~\eqref{P8:constraint2}, \nonumber 
\end{alignat}
\label{P12-parametric}
\end{subequations}
which is a convex feasibility problem, hence can be efficiently handled. The feasibility of $\mathcal{P}\textbf{11}(t)$ can then be determined if there exists a solution to $\mathcal{P}\textbf{12}(t)$. 
Now, it is observed that the solution to $\mathcal{P}\textbf{9}$ is the maximum value of $t$ at which $\mathcal{P}\textbf{11}(t)$ is feasible. Therefore, the bisection method, as described in the $\textbf{Algorithm 3}$ on top of the next page,  can be utilized to solve $\mathcal{P}\bf{9}$. \begin{algorithm2e}[ht]
\caption{Bisection method for solving $\mathcal{P} \textbf{9}$.}
Choose $t_1$ and $t_2$ ($t_1 < t_2$) so that $\mathcal{P} \textbf{12}(t)$ is feasible at $t = t_1$ and is infeasible at $t = t_2$. \\
Choose the error tolerance $\epsilon_3  > 0$. \\
\While{$t_2 - t_1 > \epsilon_3$}{
$t \leftarrow \frac{t_1 + t_2}{2}$.\\
Solve the feasibility problem $\mathcal{P}\textbf{12}(t)$. \\
\eIf{{\rm{there exists a solution to}} $\mathcal{P}{\bf{12}}(t)$}{
$t_1 \leftarrow t$.
}{
$t_2 \leftarrow t$.}
}
\end{algorithm2e}
\section{Numerical Results and Discussions}
In this section, simulation results are presented to demonstrate the SEE of the AN designs for the two transmission schemes. A typical room of the size of 5 m (Length) $\times$ 5 m (Width) $\times$ 3 m (Height) equipped with 4 LED luminaries as shown in Figs.~\ref{sysmodel1} and \ref{sysmodel2} is considered. Bob's and Eves' receivers are placed 0.5 m above the floor. Locations of LED luminaires, Bob, and Eves are determined through a Cartesian coordinate system whose origin is the center of the floor. We also assume that $I_{\text{min}} = 0$ and $I_{\text{max}} \geq 2I_{\text{DC}}$, resulting in $\Delta_{\text{DC}} = I_{\text{DC}}$. Without otherwise noted, $\delta_{\text{B}} = 0$ dB, $P_{\text{circuitry}} = 8$ Watts, $U_{\text{DC}} = 3.3$ Volts, $\zeta = 2$, and error tolerances $\epsilon_{1} = \epsilon{_2} = \epsilon_3 = 10^{-3}$ are used \cite{Duong2021}. Other system parameters are given in Table \ref{table2}. Furthermore, simulation results are averaged over 2000 randomly generated channel realizations for Bob and Eves. Without loss of generality, for the fixed AN-aided SISO scheme, luminaire 1 is chosen to be Alice, while luminaires 2, 3, and 4 act as jammers. Since the information-bearing and AN symbols are both assumed to be zero-mean, the average emitted optical power of each luminaire is dependent solely on $I_{\text{DC}}$ and given as $\overline{p}_t = \eta I_{\text{DC}}$.  
\subsection{Unknown Eves' CSI}
Firstly, it is necessary to specify the minimum allocated AN power $P_\text{th}$ as described in \eqref{P2:constraint2}. According to the constraint in \eqref{P1:constraint2}, the maximum allowable AN power in the selective AN-aided SISO scheme is $(N_T - 1)I^2_{\text{DC}}$ while it is $N_TI^2_{\text{DC}}$ in the case of the AN-aided MISO transmission. Thus, to avoid the infeasibility of \eqref{P1:constraint2} in both AN transmission schemes, $P_{\text{th}}$ can be chosen to be  $P_{\text{th}} = (N_T - 1)(\rho I_{\text{DC}})^2$, where $\rho \in [0,~1]$ is a constant that controls the maximum amount of the AN power given a fixed value of $I_{\text{DC}}$.
\begin{table}[ht]
 \caption{System Parameters} 
 \centering 
 \begin{tabular}{l l l} 
 \midrule\midrule
 \multicolumn{2}{c}{\bf{Room and LED configurations}} \\
 \midrule\midrule 
 LED luminaire  positions & luminaire  1 : ($-\sqrt{2}$, $-\sqrt{2}$, 3) ~~ &luminaire  2 : ($\sqrt{2}$, $-\sqrt{2}$, 3)\\ & luminaire  3 : ($\sqrt{2}$, $\sqrt{2}$, 3) ~~ &luminaire  4 : ($-\sqrt{2}$, $\sqrt{2}$, 3) \\
 \midrule     
 LED bandwidth, $B_{\text{mod}}$ & 20 MHz \\ 
 \midrule     
 LED beam angle, $\phi$ & $120^\circ$ \\ (LED Lambertian order is 1) \\
 \midrule 
 LED conversion factor, $\eta$ & 0.44 W/A  \\
 \midrule \midrule    
 \multicolumn{2}{c}{\bf{Receiver photodetectors}} \\
 \midrule \midrule    
 Active area, $A_r$ & 1 $\text{cm}^2$ \\ 
 \midrule     
 Responsivity, $\gamma$ & 0.54 A/W\\ 
 \midrule     
 Field of view (FoV), $\Psi$ & $60^\circ$\\ 
 \midrule     
 Optical filter gain, $T_s(\psi)$ & 1\\ 
 \midrule     
 Refractive index of the concentrator, $\kappa$ & 1.5\\ 
 \midrule \midrule
 \multicolumn{2}{c}{\bf{Other parameters}} \\
 \midrule \midrule
 Ambient light photocurrent, $\chi_{\text{amp}}$ & 10.93 $\text{A}/(\text{m}^2 \cdot \text{Sr}$) \\
 \midrule 
 Preamplifier noise current density, $i_{\text{amp}}$ & 5 $\text{pA}/\text{Hz}^{-1/2}$ \\
 \midrule \midrule
 \end{tabular}
 \label{table2}
 \end{table} 
\begin{figure*}[ht]
    \centering
    \begin{subfigure}[b]{0.32\textwidth}
        \centering
        \includegraphics[width = \textwidth, height = 5.0cm]{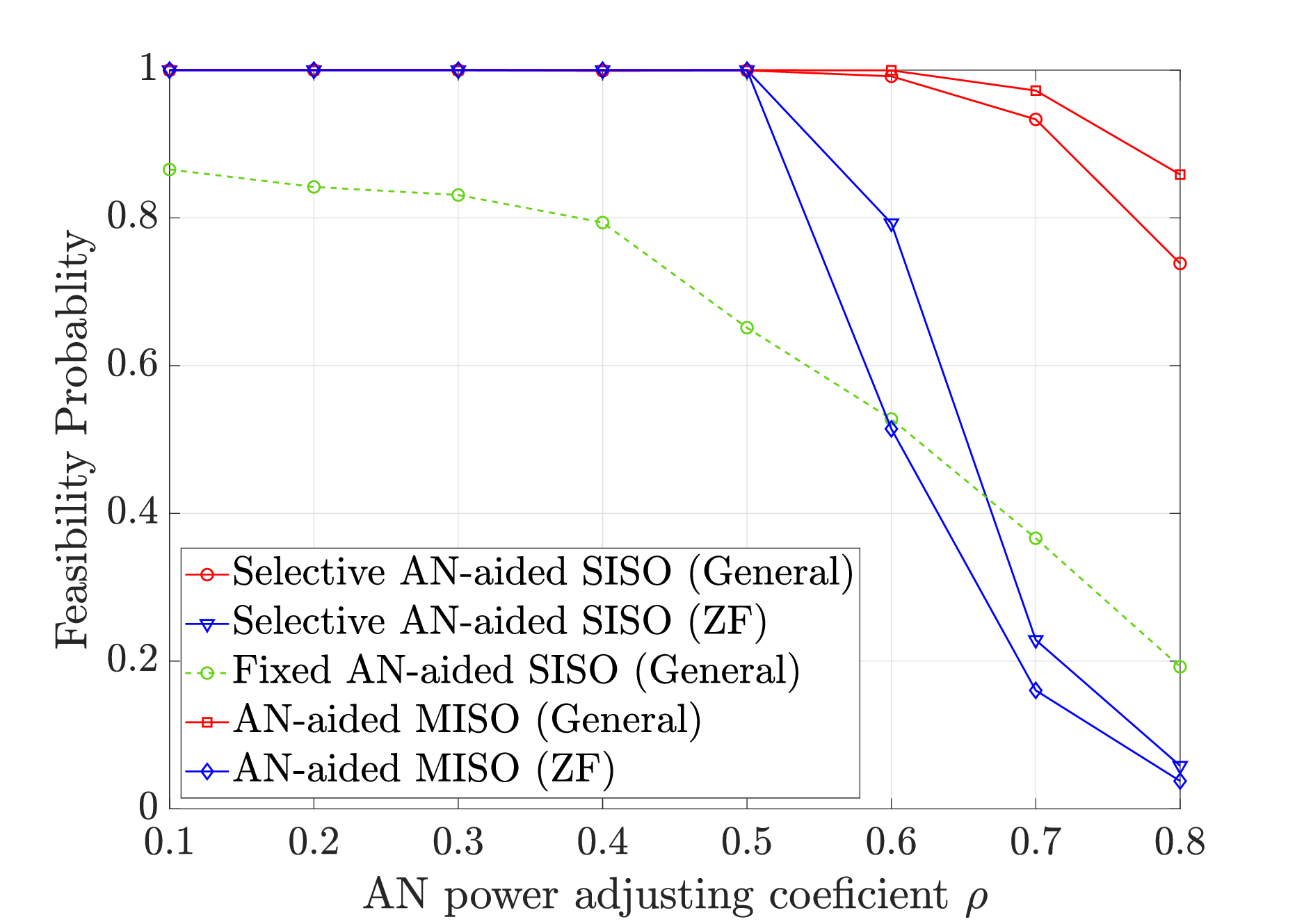}
        \caption{$\overline{p}_t$ = 30 dBm.}
        \label{Feasibility-30}
    \end{subfigure}
    \begin{subfigure}[b]{0.32\textwidth}
        \centering
        \includegraphics[width = \textwidth, height = 5.0cm]{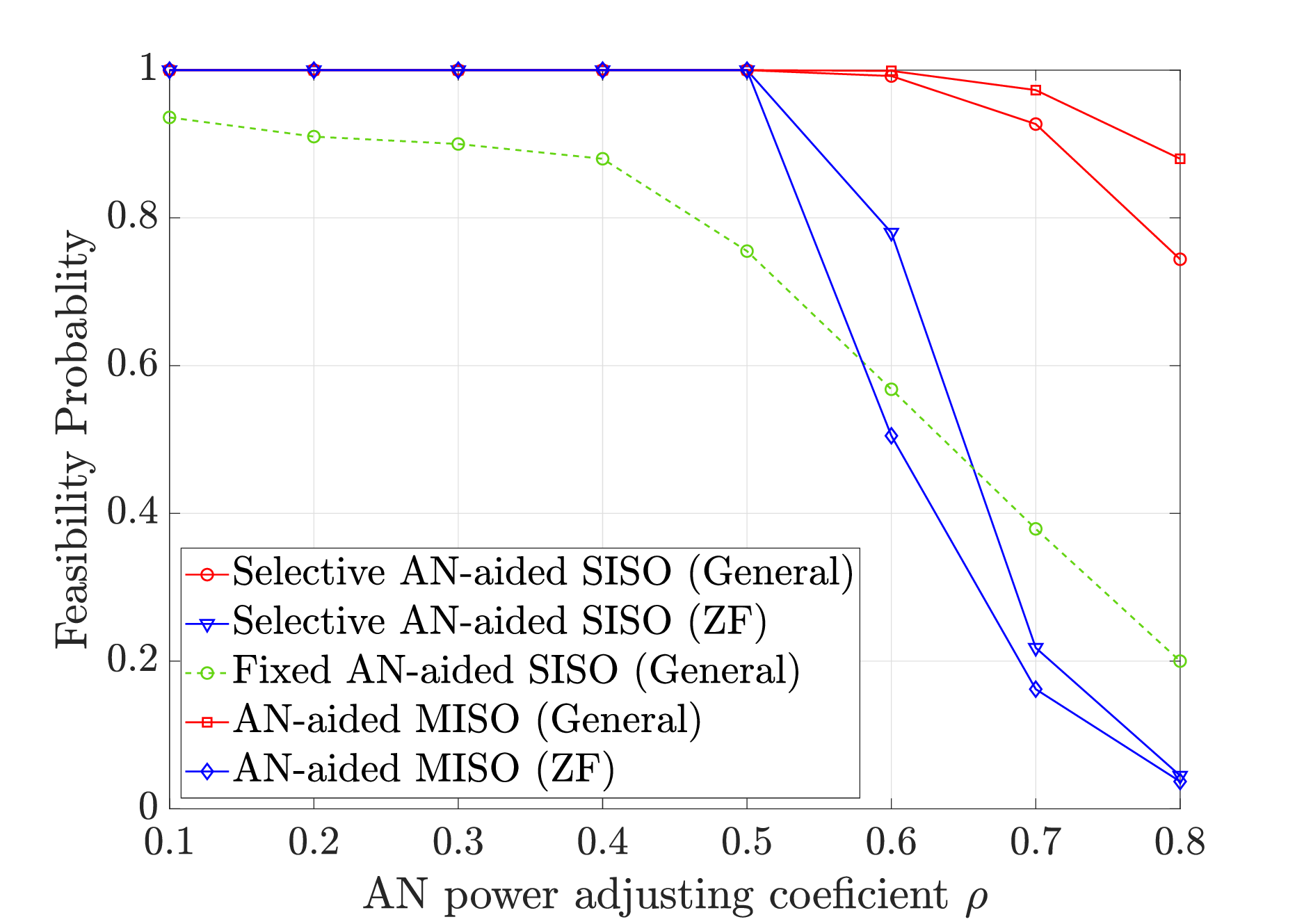}
        \caption{$\overline{p}_t$ = 35 dBm.}
        \label{Feasibility-35}
    \end{subfigure}
    \begin{subfigure}[b]{0.32\textwidth}
        \centering
        \includegraphics[width = \textwidth, height = 5.0cm]{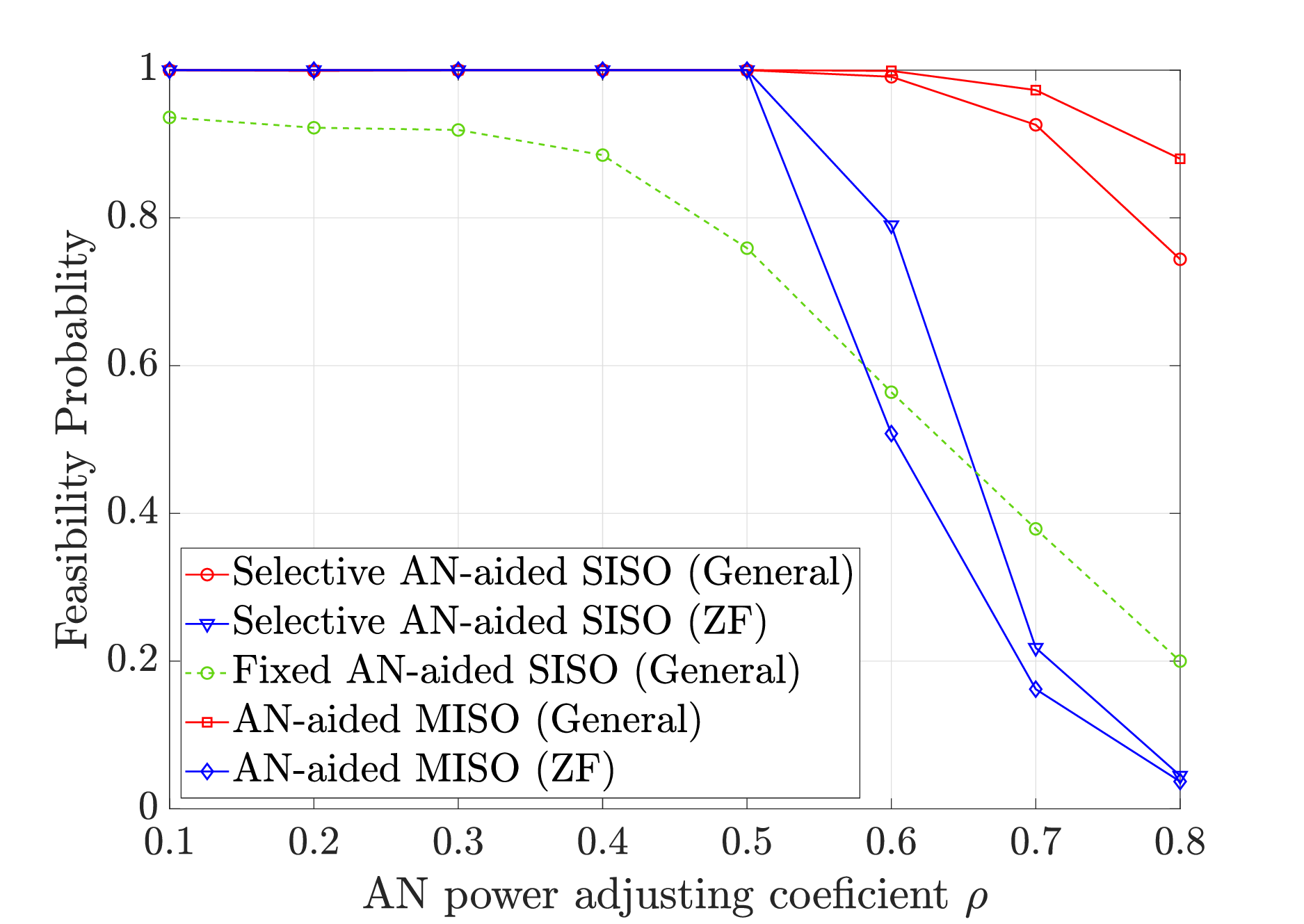}
        \caption{$\overline{p}_t$ = 40 dBm.}
        \label{Feasibility-40}
    \end{subfigure}
    \caption{Feasibility probability of AN transmission schemes.}
    \label{Fesibility_vs_ANPower}
\end{figure*}

Recall that depending on the choice of $\delta_{\text{B}}$ and $\rho$, constraints \eqref{P2:constraint1} and \eqref{P2:constraint2} may not be simultaneously satisfied, which results in an infeasibility of $\mathcal{P}\textbf{2}$ (similar arguments for the fixed AN-aided SISO and AN-aided MISO schemes). Since $\delta_{\text{B}} = 0$ dB is assumed, we show in Figs.~\ref{Feasibility-30}, \ref{Feasibility-35}, and \ref{Feasibility-40} the impact of $\rho$ on the feasibility of $\mathcal{P}\textbf{2}$ for different values of $\overline{p}_t$. When $\rho$ is less than 0.5, except for the fixed SISO scheme, all AN transmission schemes are always feasible. Their feasibility probabilities, however, start decreasing as $\rho$ exceeds 0.5, especially for the ZF AN designs (due to their reduced feasible regions imposed by the ZF constraint). Also, the low feasibility probability of the fixed SISO scheme is due to the fact that Bob can be out of the service zone as a fixed luminaire is chosen 
to transmit the information-bearing signal. To ensure sufficiently high feasibility probabilities of all transmission schemes, in the following, $\rho = 0.2$ is chosen for simulations.

To qualitatively compare the SEE performance of the examined AN-aided transmission schemes, it suffices to consider the case that the number of Eves is 1. Figs.~\ref{EE_vs_Power} and \ref{SEE_vs_Power} depict the EE of Bob's channel (i.e., the solution to $\mathcal{P}\textbf{1}$) and the resultant SEE with respect to the average luminaire optical power $\overline{p}_t$. It is clearly shown that when only Bob's channel is considered, the MISO transmission performs better than the selective and fixed SISO. This verifies the benefit of spatial precoding in improving EE. Interestingly, the resultant SEE shows a contrary observation where the proposed selective SISO scheme significantly outperforms the MISO and the fixed SISO transmissions. For instance, the optimal SEE of the selective SISO is more than twice that of the MISO scheme and about 80\% higher than that of the fixed SISO scheme. 
\begin{figure*}[ht]
    \begin{subfigure}{0.5\textwidth}
        \includegraphics[width = \textwidth, height = 0.70\textwidth]{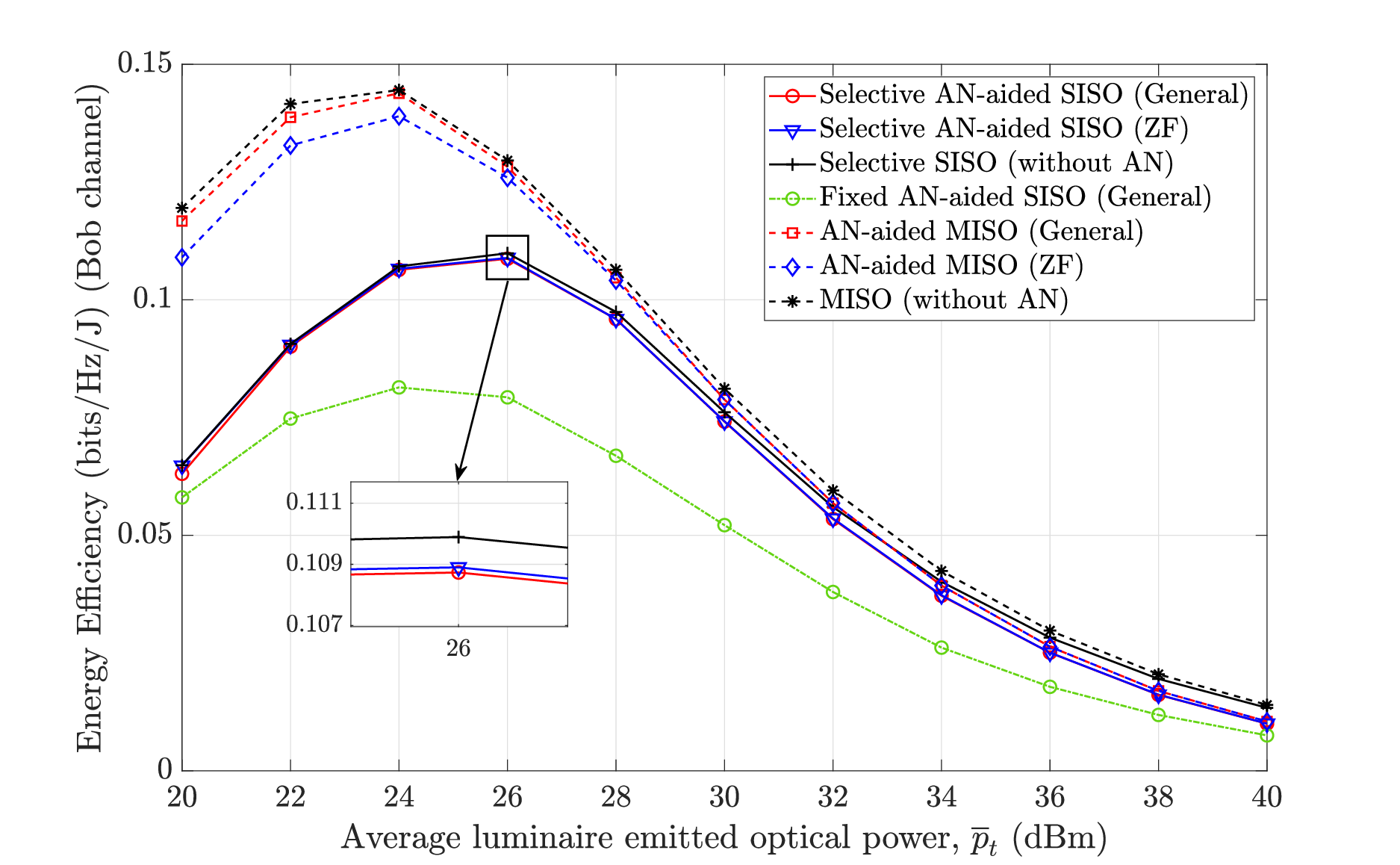}
        \caption{EE of Bob's channel versus $\overline{p}_t$.}
        \label{EE_vs_Power}
    \end{subfigure}
    \begin{subfigure}{0.5\textwidth}
        \includegraphics[width = \textwidth, height = 0.70\textwidth]{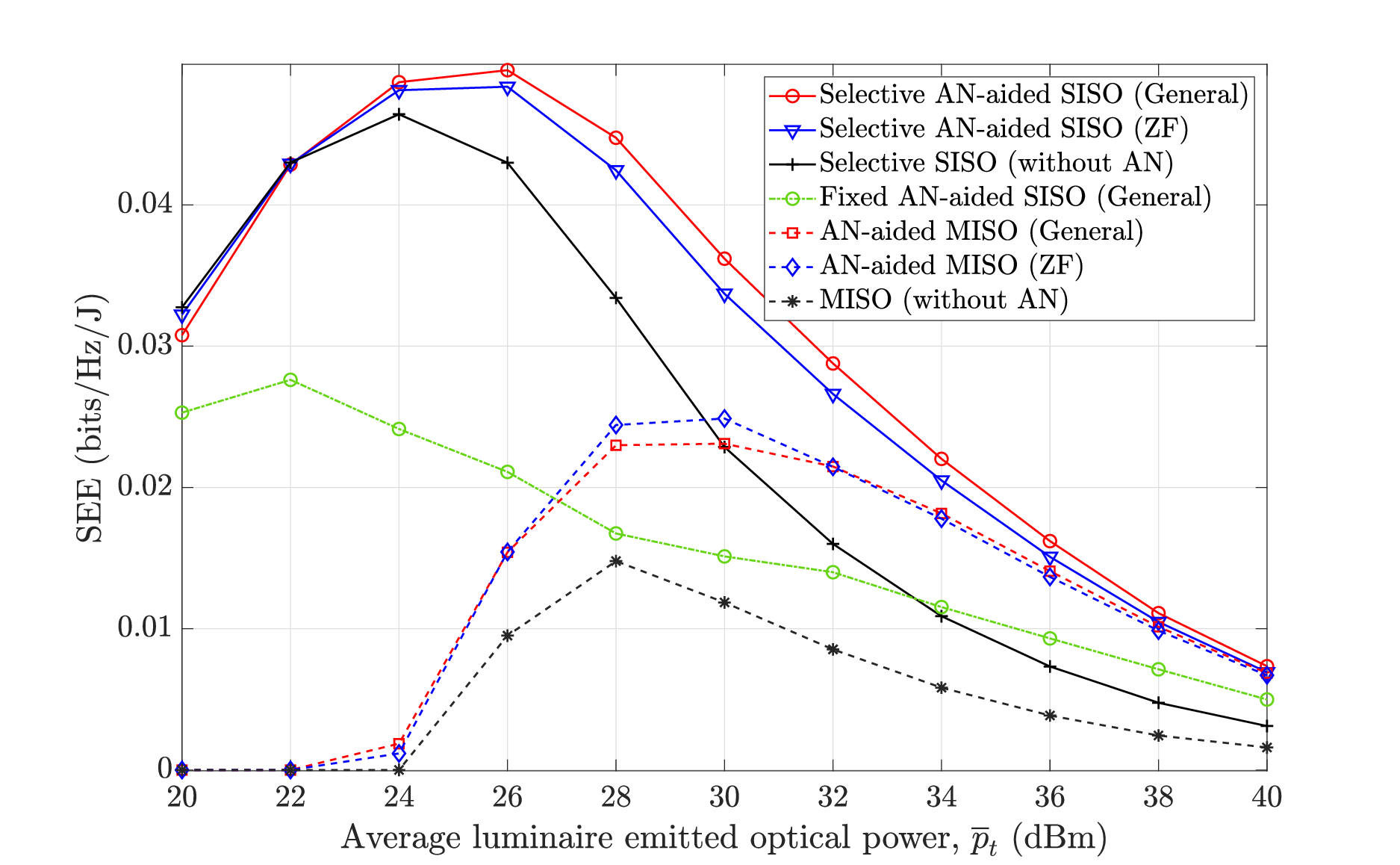}
        \caption{SEE versus $\overline{p}_t$.}
        \label{SEE_vs_Power}
    \end{subfigure}\hfill
    \begin{subfigure}{1\textwidth}
        \includegraphics[width = \textwidth, height = 0.4\textwidth]{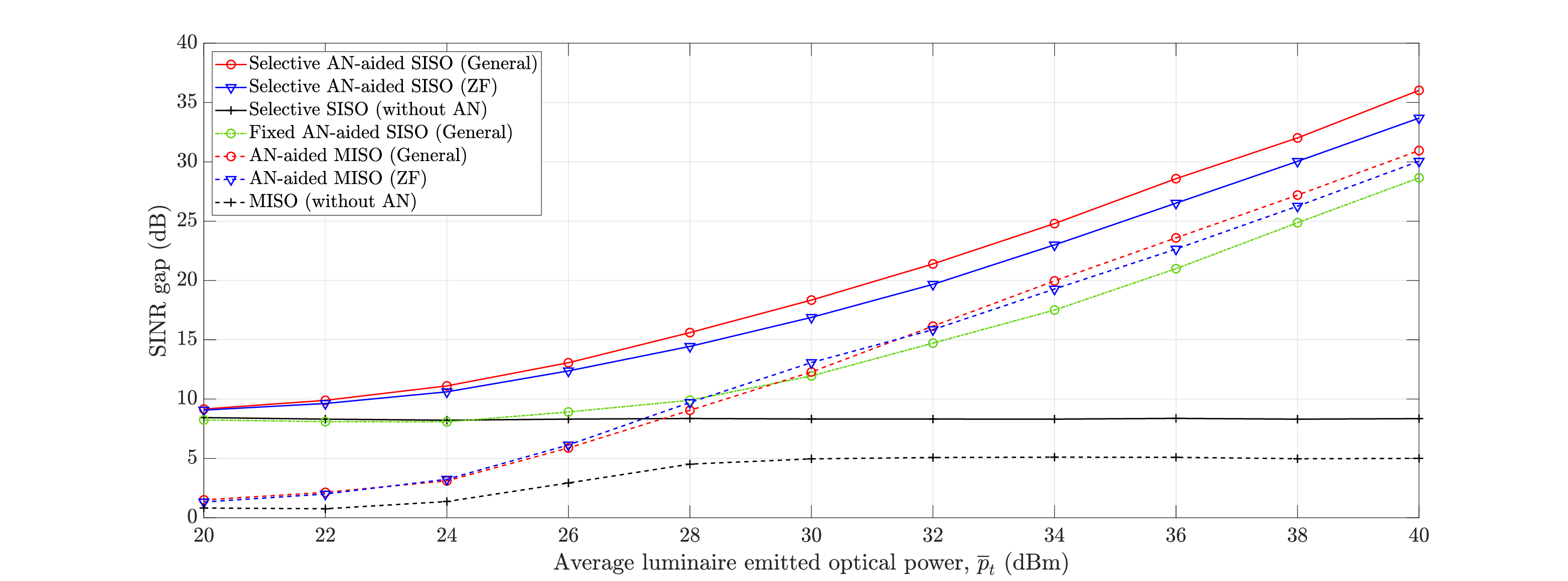}
        \caption{SINR gap versus $\overline{p}_t$.}
        \label{SINR_vs_Power}
    \end{subfigure}
    \caption{Performance of AN transmission schemes in the case of unknown Eves' CSI.}
\end{figure*}
This can be explained in Fig.~\ref{SINR_vs_Power}, which plots the gap between the signal-to-interference-plus-noise ratios (SINR) of Bob's and Eve's channels. It is seen that the SINR gap in the case of the MISO scheme is smaller than that of the selective SISO, especially in the low optical power region. Recall that the AN design, in this case, tries to maximize the EE of Bob's channel without consideration of Eve due to its unknown CSI (i.e., indirect AN design). Consequently, using more luminaires for transmitting the information-bearing signal enables Eve to receive more signal power, which results in a lower secrecy rate compared to that of the selective SISO scheme. 
As the transmit optical power $\overline{p}_t$ increases, an increase in the SINR gap is observed. Nevertheless, this improvement (due to increased $\overline{p}_t$) does not result in a better SEE because of the dominance of the power consumption. 
It is also evident that AN-aided transmission schemes offer better SEE performance than their no-AN (i.e., without AN) counterparts. Regarding the selective SISO transmissions, the ZF AN-aided design, despite its low complexity, archives a comparable SEE with that of the general design. Hence, it may be preferable for the system design when reducing the signal processing time is of importance.

We now discuss the complexity of the general AN designs for the selective SISO and MISO transmissions. As presented in Sec.~III.A, the solving procedure involves two
iterative algorithms (i.e., Dinkelback-type and CCP-type algorithms). Thus, Figs.~ \ref{Convergence-SISO} and \ref{Convergence-MISO} show the convergence error of the EE with respect to the number of iterations of the Dikelbach-type (\textbf{Algorithm 1}) and CCP-type (\textbf{Algorithm 2}) algorithms for the selective AN-aided SISO and the AN-aided MISO schemes, respectively. 
\begin{figure*}[ht]
    \centering
    \begin{subfigure}[b]{0.49\textwidth}
        \centering
        \includegraphics[width = \textwidth, height = 6.5cm]{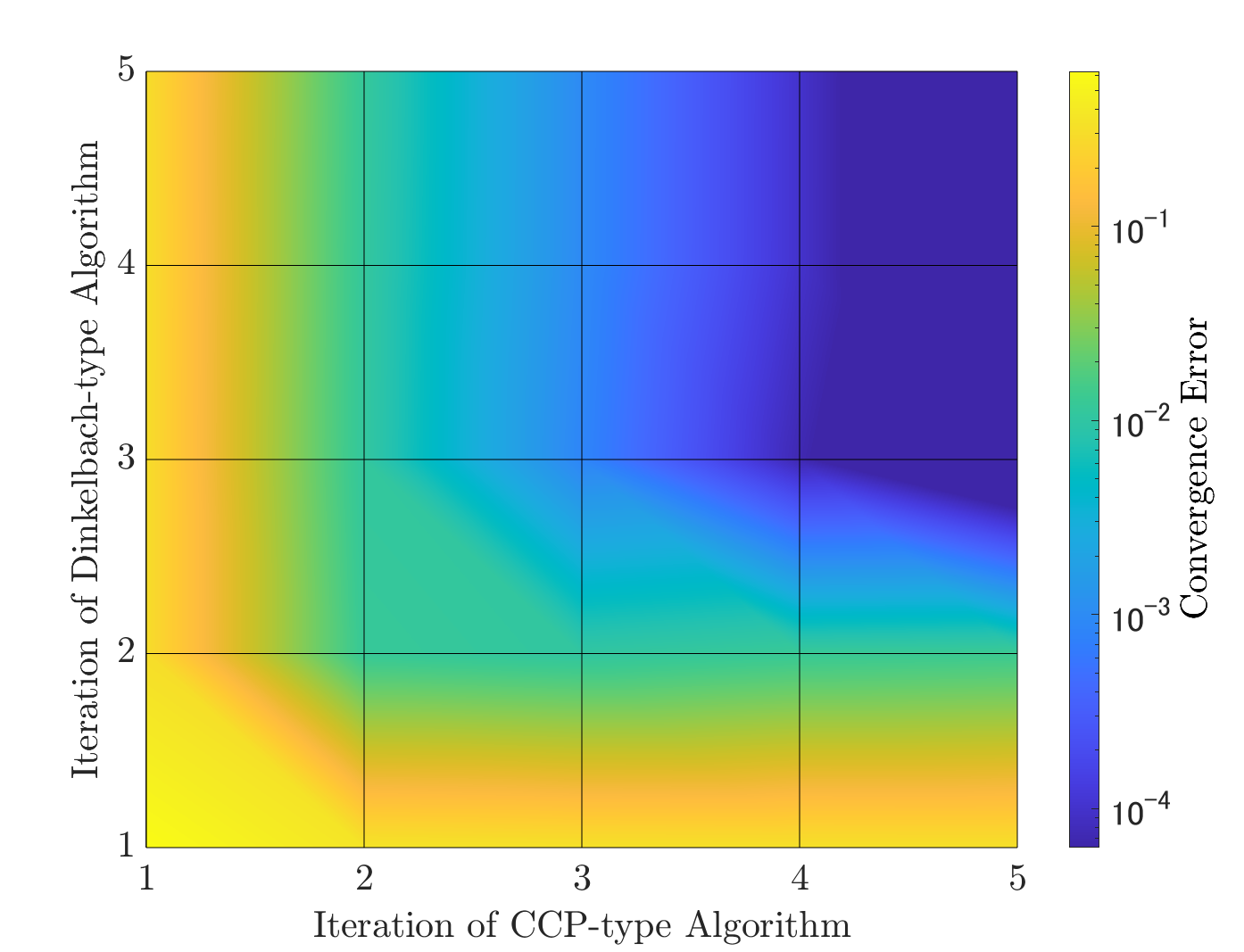}
        \caption{Selective AN-aided SISO.}
        \label{Convergence-SISO}
    \end{subfigure}
    \begin{subfigure}[b]{0.49\textwidth}
        \centering
        \includegraphics[width = \textwidth, height = 6.5cm]{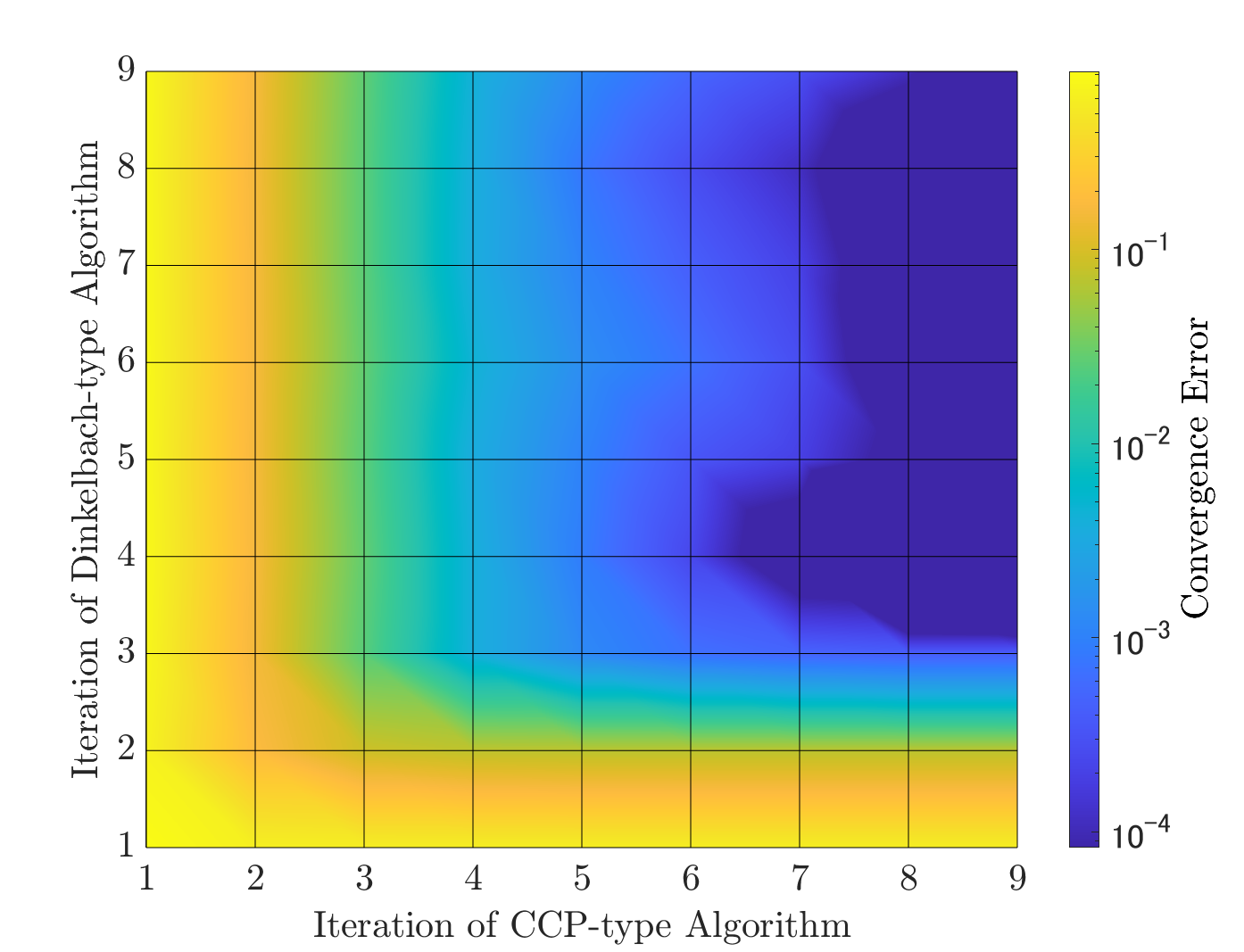}
        \caption{AN-aided MISO.}
        \label{Convergence-MISO}
    \end{subfigure}
    \caption{Convergence behaviors of the two AN schemes.}
    \label{UnknownHe:Secrecy_vs_Power}
\end{figure*}
For the former, there requires, on average, three iterations for both algorithms to achieve the convergence error of $10^{-3}$. For the latter to attain the same convergence error, at least six and four iterations are needed for the CCP-type and the Dinkelbach-type algorithms, respectively. The higher complexity of the AN-aided MISO is due to the larger sizes of its optimizing variables (i.e., $N_T$ vs. 1 for the precoder of the information symbol and $N_T$ vs. $N_T-1$ for the precoder of the AN).
In practical scenarios, when more LED luminaries are deployed (i.e., $N_T$ increases), the complexity may increase substantially due to the bigger size of the optimization variables.
\subsection{Known Eves' CSI}
\begin{figure}[H]
\centering
\begin{minipage}{.49\textwidth}
\includegraphics[width=\linewidth, height = .80\textwidth]{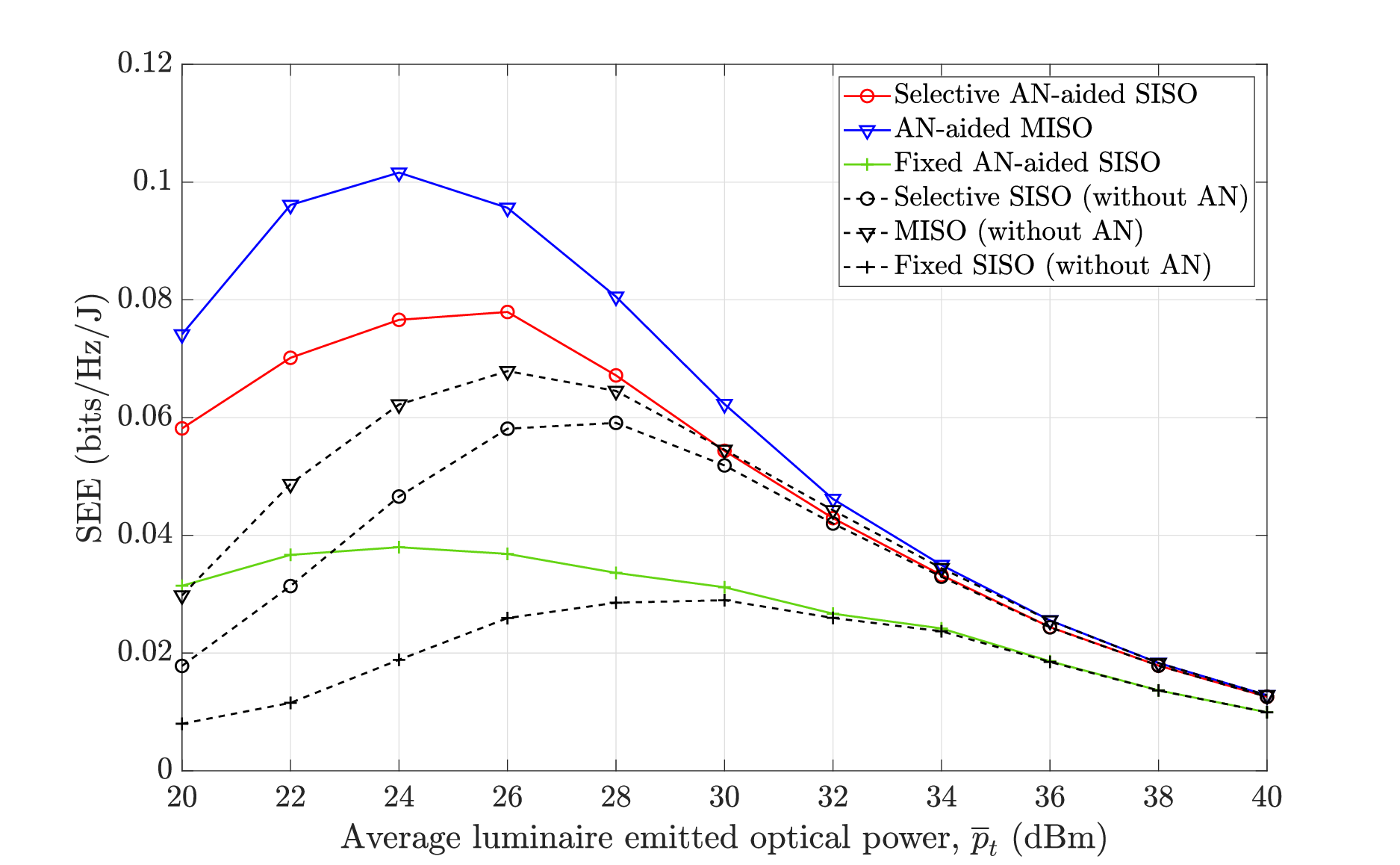}
\caption{SEE versus $\overline{p}_t$.}
\label{SEE_vs_Power1}
\end{minipage}
\begin{minipage}{.49\textwidth}
\includegraphics[width=\linewidth, height = .80\textwidth]{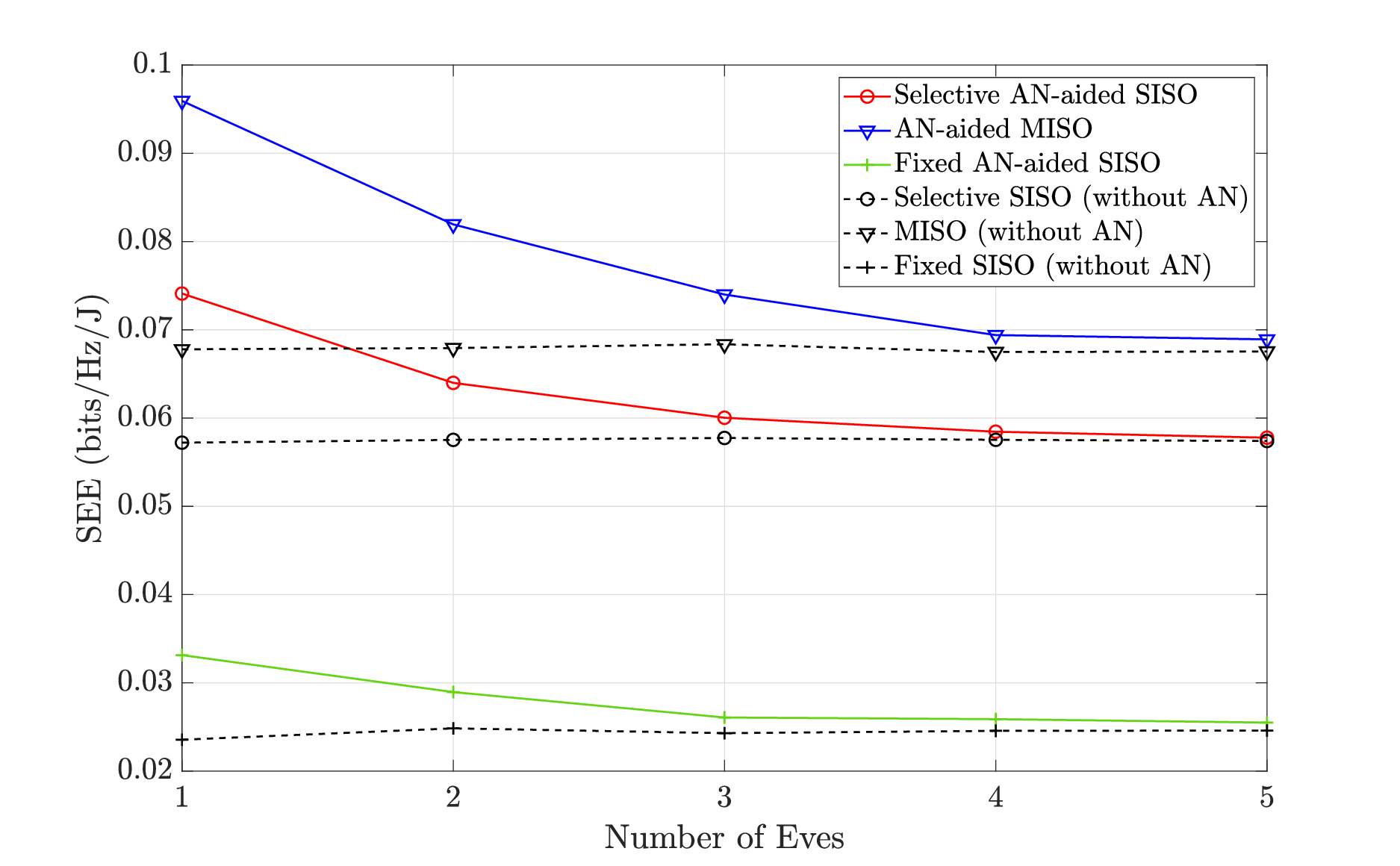}
\caption{SEE versus the number of Eves.}
\label{SEE_vs_NoOfEve}
\end{minipage}
\end{figure}
In Fig.~\ref{SEE_vs_Power1}, comparisons between AN-aided and no-AN transmission schemes in terms of the SEE are illustrated when Eves' CSI is known at the transmitters when the number of Eves is set to 1. In sharp contrast to the case of unknown Eves' CSI, it is shown in this case that at their respective optimal SEE, the MISO scheme outperforms the selective and fixed SISO counterparts by 30\% and 167\%, respectively. In the case of the indirect design when Eve's CSI is not available, due to the absence of Eve's CSI in the AN design, only one luminaire (that provides the highest channel gain) should be used to transmit the information-bearing signal so that as little as possible the information is eavesdropped on by Eves. This strategy is, however, not optimal for the direct design when Eves' CSI is considered. Since AN can be specifically constructed to attenuate Eves' channels, more luminaires should be employed to transmit the information-bearing signal so that the SEE can be improved. 

As we study VLC systems where multiple non-colluding Eves can coexist, Fig.~\ref{SEE_vs_NoOfEve} illustrates the SEE with respect to the number of Eves when $\overline{p}_t =$ 26 dBm. 
It is highlighted that while the SEE in the case of AN-aided transmission considerably deteriorates as the number of Eves increases, it stays unchanged when AN is not used. Specifically, when the number of Eves increases from 1 to 5, the SEE drops 22\%, 28\%, and 23\% for the selective AN-aided SISO, AN-aided MISO, and fixed SISO transmission schemes, respectively. This SEE degradation is due to the proportionality between the number of Eves and the number of AN streams. Hence, when more Eves are present, more power should be allocated to the AN, leading to a reduction in the SEE. Furthermore, it is seen that as the number of Eves increases, the SEE of the AN-aided transmission asymptotes that of the no-AN scheme.    
\section{Conclusion}
In this paper, we have examined the selective AN-aided SISO and AN-aided MSIO transmission schemes and studied their SEE performances. Simulation results revealed that while the selective AN-aided SISO achieves better SEE than the AN-aided MISO scheme when Eves' CSI is unavailable at the transmitting side, it is the opposite when Eves' CSI is available. It was also observed that using AN is advantageous over the no-AN transmission regardless of the availability of Eves' CSI. This work considered the case of a single legitimate user, which is a relatively special assumption. It should be noted that the proposed selective AN-aided SISO scheme was not designed for a more general scenario with multiple Bobs. 
Therefore, our future research attempts a modified transmission scheme to support such a setting. 

\begin{appendices}
\section{Proof of Proposition 1}
We first prove that $C_{\text{B}}(v, \mathbf{w})$ is strictly monotonically decreasing with $\left(\overline{\mathbf{h}}^T_{\text{B}}\mathbf{w}\right)^2$. Indeed, the first derivative of $C_{\text{B}}(v, \mathbf{w})$ with respect to $\left(\overline{\mathbf{h}}^T_{\text{B}}\mathbf{w}\right)^2$ is given by
\begin{align}
    C'_{\text{B}}(v, \mathbf{w}) = \frac{\left(2-\frac{\pi e}{3}\right)\pi e\overline{\sigma}^2_{\text{B}} - \frac{2\pi e\left(h_{\text{B}}v\right)^2}{3}}{2^{2C_{\text{B}}(v, \mathbf{w})+1}\log_2e\left(\pi e\right)^2\left(\frac{\left(\overline{\mathbf{h}}^T_{\text{B}}\mathbf{w}\right)^2}{3} + \overline{\sigma}^2_{\text{B}}\right)^2},
\end{align}
which is obviously negative. Thus, the optimal solution $\mathbf{w}^*$ to maximize  $C_{\text{B}}(v, \mathbf{w})$ satisfies  $\overline{\mathbf{h}}_{\text{B}}\mathbf{w}^* = 0$, which corresponds to  $\mathbf{w}^*$ laying on the null-space of $\overline{\mathbf{h}}^T_{\text{B}}$. It is then easily verified that $\mathbf{w}^* = \mathbf{0}$ simultaneously maximizes $C_{\text{B}}(v, \mathbf{w})$ and minimizes $P_{\text{circuit}} + \sum_{N_T} U_{\text{LEDs}}I_{\text{DC}} + \zeta\left(v^2 + \lVert\mathbf{w}\rVert^2\right)$ while satisfying the constraint in \eqref{P1:constraint2}. This proves that $\mathbf{w}^* = \mathbf{0}$ is optimal to $\mathcal{P}\mathbf{1}$.
\section{Proof of Proposition 2}
For expressional simplicity, we denote $a = \frac{2h^2_{\text{B}}}{\pi e \overline{\sigma}^2_{\text{B}}}$ and $b(\phi*) = 2\left(P_{\text{circuit}} + \sum_{N_T}U_{\text{LEDs}}I_{\text{DC}} + \zeta\phi^*\right)$. The derivative of $\Phi_{\text{B}}(V, \phi^*)$ with respective to $V$ is then given by 
\begin{align}
    \Phi'_{\text{B}}(V, \phi*) = \frac{\frac{a(b(\phi^*) + 2\zeta V)}{\ln(2)(1 + aV)} - 2\zeta\log_2(1 + aV)}{\left(b(\phi^*) + 2\zeta V\right)^2}.
\end{align}
Let $f(V, \phi^*) = \frac{a(b(\phi^*) + 2\zeta V)}{\ln(2)(1 + aV)} - 2\zeta\log_2(1 + aV)$. We then have the derivative of $f(V, \phi*)$ with respective to $V$  is $f'(V, \phi^*) = \frac{-a^2}{\ln(2)}\frac{b(\phi^*)+2V}{(1+aV)^2}$. Since $f'(V, \phi^*) < 0$, $f(V, \phi^*)$ is strictly decreasing with $V$. As it is deduced from \eqref{P7:constraint1} and \eqref{P7:constraint2} that $\frac{\delta_{\text{B}}\overline{\sigma}^2_{\text{B}}}{h^2_{\text{B}}} \leq V \leq \Delta^2_{\text{DC}}$, let us consider the following three scenarios.
\begin{itemize}
    \item If $f\left(\frac{\delta_{\text{B}}\overline{\sigma}^2_{\text{B}}}{h^2_{\text{B}}}, \phi^*\right) \geq 0$ and $f(\Delta_{\text{DC}}^2, \phi^*) \leq 0$, a critical point (also the optimal solution) $V^*$ to $\Phi_{\text{B}}(V, \phi^*)$ exists and is unique. Furthermore, $V^*$ can be found using the bisection method \cite{Young2015}. 
    \item If $f(\Delta_{\text{DC}}^2, \phi^*) > 0$ then $\Phi_{\text{B}}(V, \phi^*)$ is strictly increasing with $V$. Therefore, the optimal solution $V^* = \Delta_{\text{DC}}^2$.
    \item If $f\left(\frac{\delta_{\text{B}}\overline{\sigma}^2_{\text{B}}}{h^2_{\text{B}}}, \phi^*\right) < 0$ then $\Phi_{\text{B}}(V, \phi^*)$ is strictly decreasing with $V$. Hence, the optimal solution $V^*$ = $\frac{\delta_{\text{B}}\overline{\sigma}^2_{\text{B}}}{h^2_{\text{B}}}$. 
\end{itemize}
The proof is completed. 
\end{appendices}
\bibliographystyle{ieeetr}
\bibliography{references}
\end{document}